\newtheorem{definition}{Definition}
\newtheorem{lemma}{Lemma}
\newtheorem{theorem}{Theorem}
\def\MoBoox{\textsf{MoBoo}}
\def\MoBoo{\textsf{MoBoo }}
\def\MIIPx{\textsf{MIIP}}
\def\Greedy{\textsf{Greedy }}
\def\Greedyx{\textsf{Greedy}}
\def\GreedyBatch{\textsf{GreedyBatch }}
\def\GreedyBatchx{\textsf{GreedyBatch}}
\def\SPPx{\textsf{SPP-d}}
\def\SPTx{\textsf{SPT-d}}
\def\SPHx{\textsf{SPH-d}}
\def\LastNodex{\textsf{LastNode }}
\def\LastNode{\textsf{LastNode}}
\def\SPPMITx{\textsf{SPP-MIT}}
\def\SPPMIT{\textsf{SPP-MIT }}
\def\SPTMITx{\textsf{SPT-MIT}}
\def\SPHMITx{\textsf{SPH-MIT}}
\def\SPHMIT{\textsf{SPH-MIT }}
\def\TMoBoox{\textsf{TMoBoo}}
\def\TMoBoo{\textsf{TMoBoo }}
\def\FastTMoBoox{\textsf{FastTMoBoo}}
\def\MIIPs{\textsf{MIIPs }}
\def\TMIIPs{\textsf{TMIIPs }}
\def\SimpleMIIPsx{\textsf{MIIPs}}
\def\FastTMIIPsx{\textsf{FastTMIIPs}}
\def\SPTMTCITx{\textsf{SPT-MTCIT}}
\def\SPPMTCITx{\textsf{SPP-MTCIT}}
\def\SPHMTCITx{\textsf{SPH-MTCIT}}
\begin{document}

%
\title{
	Boosting Nodes for Improving the Spread of  Influence
}

\author{\IEEEauthorblockN{Konstantinos Liontis and Evaggelia Pitoura}
\IEEEauthorblockA{Computer Science and Engineering Department, University of Ioannina, Greece}
\{kliontis, pitoura\}@cs.uoi.gr}

\maketitle

\begin{abstract}
Information diffusion in networks has received a lot of recent attention. 
Most previous work addresses the influence maximization problem of selecting an appropriate set of seed nodes to initiate the diffusion process so that the largest number of nodes is reached. 
Since the seed selection problem is NP hard, most solutions are sub-optimal. 
Furthermore, there may be settings in which the seed nodes are predetermined. Thus, a natural question that arise is: given a set of seed nodes, can we select a small set of nodes such that if we improve their reaction to the diffusion process, the largest increase in diffusion spread is achieved?  We call this problem, the {\em boost set selection} 
problem. In this paper, we formalize this problem, study its complexity and propose appropriate algorithms. 
We also  evaluate the effect of boosting in a number of real networks and report the increase of influence spread achieved for different seed sets, time limits in the diffusion process and other diffusion parameters.
\end{abstract}


%
\IEEEpeerreviewmaketitle

\section{Introduction}
Information diffusion
in social networks 
has received a lot of recent attention.
Among others, a central motivating application is viral marketing, where the adoption of 
a product or an idea by a small number of opinion leaders triggers a large cascade of further adoptions
via ``word of mouth''.
In this context, influence maximization is defined as the problem of identifying  a small 
set of initial nodes to influence such that the 
spread of influence in the network is maximized.

The theoretical and algorithmic foundations of  influence maximization
were laid in the seminal work of \cite{DR01} and \cite{KKT03}, where
the {\em Independent Cascade} (IC) model, one of the most commonly used propagation model, was also introduced.  
In this model, 
once influenced, a node has a single chance to influence its neighbors captured by
a per-edge influence probability.
Since then, there has been a surge of research activity focusing on many aspects of diffusion.
%
%
An important extension 
considers time in the diffusion process.
In real networks, diffusion is not spontaneous \cite{KKW08, LCXZ12}, instead,
there is delay or latency in the propagation of information from a person to another,
depending for example, on the activity patterns of users, e.g., how often
they check their social network accounts. Furthermore, for many applications,
time  is critical.
For example, it is important that a user hears about a concert before the concert takes place,
or, that opinions are formed before an election takes place.    

Most current research so far has focused on the influence maximization problem of selecting 
the influencers or early adopters, called {\em seeds}.
However, there are cases in which the initiators of an activation are  
predetermined, for example, seeds may correspond to news agencies, on-site reporters, eye-witnesses, 
or, known advocates of a product.
Furthermore, even when the seeds are not fixed, since the seed selection problem is NP hard \cite{CWY09}, often the set of the selected seeds 
is sub-optimal.
Thus, a natural question arises: How can we increase
the spread of diffusion for a given seed selection?


In this paper, 
we  assume that it is possible to improve the reaction to
the diffusion process of a small number of nodes by investing extra resources, e.g., by giving out free samples of a product, engaging gamification, or other marketing strategies.
In particular, we assume that we can make 
$k$ nodes more influential to others.
Then, for a given seed set,  
we would like to identify those {\em k} nodes whose improved reaction to a diffusion will result
in maximizing the average number of nodes influenced
by a given time instant $T$. 
We call this problem the {\em Boost Set Selection} problem.

Formally, we define boosting a node as improving its probability of influencing others,  
making the node react to an activation faster, or both. 
We show that the Boost Set Selection problem is NP-hard for the IC model.
Furthermore, we show that, in contrast to most influence maximization problems, the Boost Set Selection problem is not submodular.
Then, we propose a number of algorithms for the problem. 
The natural greedy algorithm runs at $k$ steps and at each step
selects to boost the node with the largest marginal benefit. However, greedy is computationally expensive, since multiple simulations are needed for
estimating the benefit  in the influence spread. We design efficient algorithms that
estimate the benefit of boosting through exploiting most probable paths.
Our first algorithm, the {\it Most Probable Path Algorithm} (\MoBoox) is a very fast
algorithm that ignores activation delays, while the {\it Time-dependent Most Probable Path Algorithm} (\TMoBoox) improves the approximation with some extra cost.
Our {\it Maximum Influence Independent Path} (\MIIPx) algorithms
consider additional diffusion propagation paths.
 Finally, we exploit a family of algorithms that select nodes
based on their proximity to the seed nodes.

Note that our complexity results and our algorithms are readily applicable to
an alternative interpretation of boosting in which improved reaction to an activation for a node means that the node itself is made more receptive to activations. 

We have evaluated the effect of boosting and the efficiency of our algorithms
experimentally using four real datasets. 
In many cases, boosting just a couple of nodes 
improves the diffusion spread more than adding more
seeds. 
Furthermore, our \MoBoo and \TMoBoo algorithms are orders of magnitude faster than greedy and return comparable boost sets.

The rest of this paper is structured as follows. In Section \ref{rw}, we place our work in context with related research. In Section \ref{mod}, we define the Boost Set Selection problem formally and present
results regarding its hardness, while in Section \ref{alg}, we introduce our algorithms. In 
Section \ref{exp}, we report our experimental results. Finally, Section \ref{con} offers conclusions.

\section{Related Work}
\label{rw}
Since the seminal work of \cite{DR01, KKT03}, 
there has been a large body of research on influence diffusion
including efficient algorithms, e.g., \cite{LKGFVG07,BBCL14,TXS14,TSXX15,DSGZ13,CWW10,KS06,LCXZ12,WCSX10,CWY09}, alternative formulations of the influence maximization problem and new propagation models, 
for instance, for capturing temporal aspects, e.g., \cite{LCXZ12,CLZ12,CT-IC}.
However, we are not aware of any other work directly related to the boosting problem as introduced in this paper.
%
Note that in this paper, we assume that the underlying diffusion graph is known.
There is a also large body of research on inferring the graph, e.g., \cite{GBS11,GBL10,GLS13}. 
Such research is clearly orthogonal to ours.

\noindent {\textbf {\textit{Temporal Aspects.}}} 
There have been various proposals for enhancing the IC model with temporal aspects. 
The time-aware propagation model employed in this paper is
most similar to the latency aware independent cascade model (LAIC) \cite{LCXZ12} that
incorporates latency information into the standard independent cascade model by using
a per node delay function.
Somewhat different approaches include 
a new independent cascade with meeting events (IC-M) model \cite{CLZ12}
and the continuously activated and time constrained IC model \cite{CT-IC}. Our model and algorithms are applicable to such models as well.

\noindent{\textbf {\textit {Algorithms for Seed Selection.}}} 
Since the greedy algorithm is computationally expensive, there have been numerous 
proposals of more efficient algorithms.
The most similar approach to our \MoBoo algorithm is PMIA \cite{CWW10}, that,  as \MoBoox,
assumes that influence propagates through the most probable paths so as to estimate the local influence of nodes for seed selection.
In this paper, we use this assumption for estimating the influence of boosting a node, which is a different problem that
leads to different algorithms. 

A line of other research focuses on improving the performance of greedy.
To this end, 
CELF \cite{LKGFVG07} exploits the submodularity property
to avoid re-evaluating the marginal gain of each candidate node at each iteration of greedy.
Since the boost selection problem is not submodular, CELF cannot be used in our case.
There are many other approaches to efficiently estimate influence.  
A run time optimal (up to a logarithmic factor) algorithm was recently proposed in \cite{BBCL14}, while CONTINEST 
uses randomizations for influence estimation in a continuous-time diffusion network \cite{DSGZ13}.
Two variations of the IC model were proposed in \cite{KS06} that require diffusion 
to be performed only through shortest paths in which case the greedy algorithms are faster.
In \cite{LCXZ12}, an improved greedy for the time-constrained influence maximization problem 
along with two influence spreading based algorithms are proposed. 
Other research  includes a community-based approach \cite{WCSX10} that exploits the structural properties of social networks.


Lastly, various fast algorithms for the IC model select seeds based on centrality measures such node degrees \cite{CWY09}, betweenness, or closeness.
Along this line, in this paper, we use the proximity of the boost nodes to the seeds which is a property relevant to our problem.

\vspace*{0.01in}
\noindent{\textbf {\textit{Edge Augmentation and Diffusion.}}}
Besides ``boosting'' existing connections, another way to improve diffusion is by  increasing the connectivity of the network through adding new edges. 
The authors of \cite{recommend} explore edges additions in the context of recommendations.
In particular, they ask which edges from a set of recommended edges if added to a network would result in maximizing  the content spread among all nodes.
This problem is different from the problem studied here since we look into improving the
probability of existing edges, instead of selecting new edges from a given set.
Furthermore, we select nodes instead of edges and assume that diffusion starts from specific initiators. 
The authors of \cite{gelling} ask which edges to add or remove from a network so as to speed-up or contain a dissemination.
They study epidemic disseminations where the focus is on affecting the epidemic threshold by  altering the leading eigenvalue of the adjacency matrix of the graph.  

\section{Problem Definition}
\label{mod}
In this section, we start by describing the time-constrained independent cascade propagation model
and then, we present a formal definition of the boost set selection problem, prove that
the problem is NP-hard and show that it is not submodular.  

\subsection{The Time-Constrained IC Model}
Let us first describe the standard Independent Cascade (IC)
influence propagation model as introduced in \cite{KKT03}. 
In the IC model, the underlying network is modeled as a directed graph $G(V, E)$, 
where $V$ is a set of nodes representing users and $E$ is a set of directed edges representing 
relationships between them. Each edge $(u, v)$ in $E$ is associated with an activation 
or influence probability $p_{uv} > 0$. 

Influence propagation proceeds in discrete steps. A seed set $S$ of nodes, $S$ $\subseteq$ $V$, 
is activated at step $0$, while all other nodes are inactive.
At any subsequent step  $i$ $\geq$ 1, any node that has become active at the previous $i-1$ step is given a single chance to activate any of its currently inactive neighbors.  A node $u$ succeeds in activating its neighbor $v$ with probability $p_{uv}$ independently of the history so far. Once activated, a node remains active. The diffusion process runs until no additional activations are possible.

For a set of nodes $A$, we call {\em influence spread} of $A$, denoted by $\sigma(A)$,  the expected number of activated nodes at the end of the process when $A$ is used as the seed.
Then, the {\it influence
maximization problem} is defined as the problem of finding, for a parameter $k$, a $k$-node set $S$ such that $\sigma(S)$ is maximized.

In the initial IC model, there was no notion of time. However, since 
the actual time of an activation is central in many applications, 
the initial model has been extended to incorporate time (e.g., \cite{CLZ12,LCXZ12,CT-IC}).
In most cases, users do not respond to an activation immediately.
Instead, when activated (e.g., notified about an item), they propagate the activation inside a period of time
whose duration depends on many factors such as their personal characteristics, or habits, such as, how often they check their
accounts, or their judgment about the immediacy of the item  \cite{KKW08,GBS11}.  


To model time-dependent diffusion,
in addition to the activation probability, we associate with each user $u$ a delay function
$d_u$ which captures the distribution of the activation delay of 
$u$. In particular, when  $u$ activates one of its neighbors,  
$d_u(t)$ is the probability that $u$ does so
at exactly $t$ time units after its own activation.
As in the IC model, initially, at time $0$, a seed set $S$ is activated. 
A node $u$ activated at time $t$ activates each of its inactive neighbors $v$ at $t+i$ with probability 
$p_{uv}$$d_{v}(i)$. A node $v$ may be activated by different neigbhors at different time instants; we assume the earliest   
amongst these time instants as the activation time of $v$.

We shall use $\sigma_T(S)$ to denote the expected number of activated nodes after $T$ time instances.  
The {\it time constrained influence maximization problem} is the problem of finding, for a given $k$ and $T$, 
a seed set $S$ with at most $k$ nodes such that $\sigma_T(S)$ is maximized, for a given $T$.

The influence maximization problem has been shown to be NP hard for both  the initial IC problem and  its various time-related extensions \cite{CLZ12,LCXZ12,CT-IC}. 

\subsection{The  Boost Set Selection Problem}
Often it may be possible to increase the influence spread by investing resources towards increasing
the ability of specific users to influence others. 
We call this process, {\em boosting}. 
We assume that  boosting a node results in increasing its probability of activating its neigbors 
as well as increasing the speed of this activation.
Specifically, we model the effect of boosting a node $u$ by altering the activation probabilities $p_{uv}$ 
and its delay function $d_u$.

The definition of the boost set selection problem is orthogonal to the
specifics of boosting, however, to make the description more 
concrete,  
we quantify the amount of ``boosting'' through a quantity $b$, 0 $<$ $b$ $\leq$ 1 that amounts for the increase in
the activation probabilities. For simplicity, we assume the same increase $b$ for all boosted nodes.

In terms of the delay function, we assume that boosting a node $u$ may also result in     		 
increased $d_u(t)$ values for small values of $t$, intuitively, shifting the delay distribution to the left.  That is, we assume that the probability of the boosted node to react early may also increase as an effect of boosting.

\begin{definition} {\sc (node boost)}
Boosting a node $u$ by $b$, 0 $<$ $b$ $\leq$ 1 results in:  
(1) replacing $p_{uv}$ for all edges $(u, v)$ $\in$ $E$, with $p'_{uv}$ where $p'_{uv}$ = $p_{uv}$ + $b$, if 
$p_{uv}$ + $b$ $\leq$ 1 and $p'_{uv}$ = 1, otherwise
and (2) replacing $d_u(t)$ with $d'_u(t)$, such that   
$\sum_{i=0}^{t}$$d'_u(i)$ $\geq$ $\sum_{i=0}^{t}$$d_u(i)$, for all $t \geq 0$. 
\end{definition}




We call {\em boosted influence spread},  $\pi_{S, T}(B)$, 
the expected number of nodes activated at time $T$
with a seed set $S$, if the set $B$ $\subseteq$ $V$ of nodes is boosted.
We are now ready to formulate the boost set selection problem.

\begin{definition}  {\sc (The Boost Set Selection Problem)}
Given a graph $G(V, E)$, 
a seed $S \subseteq V$ of initially activated nodes, 
find a set $B \subseteq V $ of $k$ nodes such that $\pi_{S, T}(B)$ 
is maximized. 
\end{definition}

\subsection{Hardness of Boost Set Selection}
In this section, we study the complexity of the boost set selection problem. We first show that the boost selection problem is NP hard. 

\begin{theorem}
The boost set  selection problem is NP-hard.
\end{theorem}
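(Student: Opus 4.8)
The plan is to reduce from \textsc{Set Cover} (equivalently, \emph{Maximum Coverage}), adapting the classical NP-hardness argument for influence maximization of~\cite{KKT03} to the boosting setting. An instance of \textsc{Set Cover} is a universe $X=\{x_1,\dots,x_n\}$, a family of subsets $C_1,\dots,C_m\subseteq X$, and an integer $k$; since the case $k\ge m$ is decidable directly (use all $m$ sets), assume $k<m$. The question is whether some $k$ of the $C_j$ cover $X$.

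First I would build, from such an instance, a diffusion instance in polynomial time. The graph $G(V,E)$ has a single seed node $s$ (so $S=\{s\}$), a \emph{set node} $u_j$ for every $C_j$, and an \emph{element node} $x_i$ for every $x_i\in X$; there is an edge $(s,u_j)$ with probability $1$ for each $j$, and an edge $(u_j,x_i)$ with probability $\epsilon:=\tfrac{1}{2nm}$ whenever $x_i\in C_j$. I take boost amount $b=1$ and make every delay function the point mass at $0$, so all activations are instantaneous; note this also pins down the boosted delay functions, since condition~(2) of the node-boost definition then forces $d'_u=d_u$, and hence boosting a node $u$ has the sole effect of raising every $p_{uv}$ to $1$. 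Pick any time bound $T\ge1$; the bound is then irrelevant. All probabilities have $O(\log(nm))$ bits, so the reduction is polynomial.

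The key computation is then the value $\pi_{S,T}(B)$ for a size-$k$ boost set $B$. Node $s$ and all $m$ set nodes are active by time $1$ with probability $1$, contributing exactly $m+1$ to $\pi_{S,T}(B)$ regardless of $B$. Boosting $s$ or any element node has no effect (their out-edges are already at probability $1$, respectively there are none), so an optimal $B$ may be taken to consist of $k$ set nodes; boosting $u_j$ sets $p_{u_j x_i}=1$ for every $x_i\in C_j$. Hence, writing $\mathrm{cov}(B)=\bigl|\bigcup_{u_j\in B}C_j\bigr|$, every element covered by $B$ is activated with probability $1$, while an uncovered $x_i$, lying in $t_i\le m$ of the sets, is activated with probability $1-(1-\epsilon)^{t_i}\le t_i\epsilon\le m\epsilon=\tfrac{1}{2n}$. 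Summing over the at most $n$ uncovered elements,
\[
(m+1)+\mathrm{cov}(B)\ \le\ \pi_{S,T}(B)\ \le\ (m+1)+\mathrm{cov}(B)+\tfrac12 .
\]
Therefore $\mathrm{cov}(B)=\lfloor\pi_{S,T}(B)\rfloor-(m+1)$ (padding with arbitrary set nodes shows a cover of size $\le k$ corresponds to a size-$k$ boost set), and in particular $\max_{|B|=k}\pi_{S,T}(B)=m+1+n$ exactly when the \textsc{Set Cover} instance is a ``yes''. An oracle for boost set selection thus decides \textsc{Set Cover}, establishing NP-hardness.

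The step I expect to need the most care is controlling the expectation. Because $\pi_{S,T}$ averages over all cascade outcomes, a naive gadget would leave every element node with a non-trivial activation probability and so erase the distinction between ``covered'' and ``uncovered'' sets; choosing $\epsilon$ polynomially small is exactly what keeps the aggregate contribution of the un-boosted edges strictly below $1$ (so the coverage number can be read off $\pi_{S,T}$) while keeping the instance size polynomial. A secondary, routine point is verifying that boosting any non-set node is useless, which is immediate from the degenerate shape of the gadget and is what lets us restrict attention to boost sets consisting of set nodes.
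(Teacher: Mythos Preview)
Your proof is correct and follows the same overall route as the paper: a reduction from \textsc{Set Cover} via a bipartite gadget in which boosting a set node with $b=1$ makes it activate all of its elements deterministically. The paper's instantiation differs in two small ways. First, it takes \emph{all} $m$ set nodes as the seed set directly, rather than introducing an auxiliary source $s$ with probability-$1$ edges to them. Second, it assigns probability exactly $0$ to every set-to-element edge, so the correspondence between $\pi_{S,T}(B)$ and $|\bigcup_{u_j\in B}C_j|$ is exact and no $\epsilon$-bounding is required. Your variant, by using a small positive $\epsilon$ and a single seed, respects the model's stipulation that edge probabilities are strictly positive and shows that hardness does not hinge on that degenerate boundary case; the price is the expectation computation you flag at the end. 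Both arguments are sound: yours is a touch more robust, the paper's a touch terser.
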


\begin{proof}
We prove the lemma by reduction from the Set Cover Problem. The Set Cover Problem is defined as follows.
Given a collection of subsets $X_1$, $X_2$ $\dots$ $X_m$ of a ground set $U$ = \{$x_1$, $x_2$, $\dots$ $x_n$\}, we  
ask whether there exist $k$ subsets whose union is equal to $U$. 
Given an arbitrary instance of the Set Cover problem, we define a corresponding bipartite graph $G$ as follows. There is a node $u_i$ for 
each set $X_i$, a node $u_j$ for each element $x_j$ and a directed edge $(u_i, u_j)$ with activation 
probability $p_{u_i,u_j}$ equal to 0. 
We assume a large enough $T$ such that the influence spread is not affected.
The sets $X_i$ correspond to the seeds.
The Set Cover problem is equivalent to deciding whether there is a boost set $B$ of $k$ nodes with $b$ = 1 in this graph with $\pi_{S, T}(B)$  $\geq$ $n+k$. 
If $X$ is a solution to the Set Cover problem,
then boosting the $k$ nodes that correspond to the subsets in $X$ will
result in changing their activation probabilities to 1, thus all nodes in the ground truth set will be activated.
Conversely, if for any set $B$ of $k$ nodes, it holds that $\pi_{S, T}(B)$  $\geq$ $n+k$, 
then this set is a solution to the Set Cover problem. \end{proof}

\begin{figure}
        \centering
                \includegraphics[width=5cm]{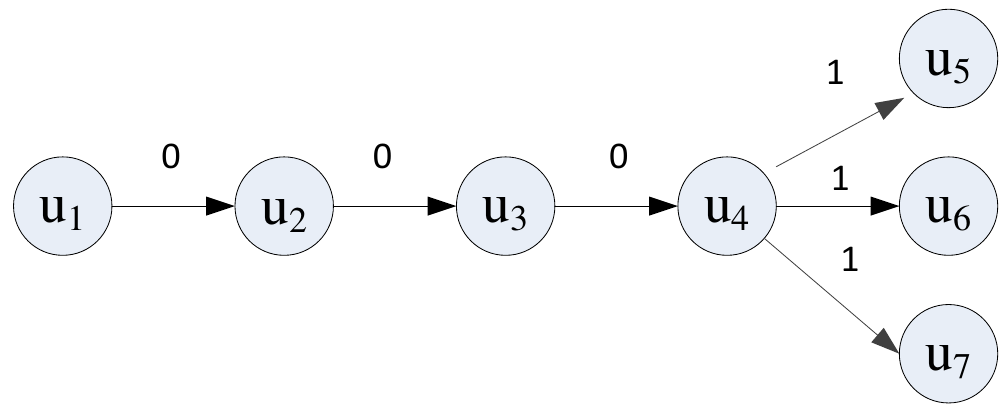}
        \caption{Counter example for showing non modularity.}
        \label{ce}
\end{figure}

Clearly, function $\pi_{S, T}(B)$  is monotonous, since
if we add a node to a boost set $B$, the mean number of nodes activated can only increase.
However, in contrast to most influence maximization problem, $\pi_{S, T}(B)$  is not submodular. 
Let $\Omega$ be a set of elements and $f$:$2^{\Omega}\rightarrow$ $R$ be a set function. 
A set function $f$ is called submodular, 
if $f(X \cup \{u\})$ - $f(X) \geq f(Y \cup \{u\}) - f(Y)$ for all elements $u$ in $\Omega$ and all pairs of sets $X \subseteq Y$.

\begin{theorem}
Function $\pi_{S, T}(B)$  is not submodular. 
\end{theorem}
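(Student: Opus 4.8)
The plan is to refute submodularity by an explicit counterexample: we exhibit a graph, a seed set $S$, two nested boost sets $X \subseteq Y$, and a node $u \notin Y$ for which $\pi_{S,T}(X \cup \{u\}) - \pi_{S,T}(X) < \pi_{S,T}(Y \cup \{u\}) - \pi_{S,T}(Y)$. The effect to exploit is \emph{complementarity along a weak path}: when a node is reachable only through a chain of low‑probability edges, boosting a single node on the chain raises the reach probability of the downstream nodes by only a lower‑order term, whereas boosting two consecutive chain nodes produces a multiplicative jump. This is precisely the reverse of the diminishing‑returns behaviour required by submodularity, so it should break the inequality.

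Concretely, I would use a small instance along the lines of Figure~\ref{ce}: a seed $s$ together with a directed path $s \to a \to b \to t$ whose three edges all carry the same activation probability $p \in (0,1)$, with $T$ taken large enough that every reachable node is eventually counted and delays are irrelevant (one may even keep the delay functions unchanged under boosting, which is allowed by Definition~1 since $\sum_{i\le t} d'_u(i) \ge \sum_{i\le t} d_u(i)$ holds with equality). Boosting with $b=1$ turns a boosted node's outgoing probability into $\min(p+1,1)=1$. A direct computation of the expected number of active nodes then gives $\pi_{S,T}(\emptyset) = 1 + p + p^2 + p^3$, $\pi_{S,T}(\{a\}) = 1 + 2p + p^2$, $\pi_{S,T}(\{b\}) = 1 + p + 2p^2$, and $\pi_{S,T}(\{a,b\}) = 1 + 3p$.

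Taking $X = \emptyset$, $Y = \{a\}$ and $u = b$, the marginal gains are $\pi_{S,T}(\{b\}) - \pi_{S,T}(\emptyset) = p^2(1-p)$ and $\pi_{S,T}(\{a,b\}) - \pi_{S,T}(\{a\}) = p(1-p)$. Since $p \in (0,1)$ we have $p(1-p) > p^2(1-p)$, so the marginal benefit of boosting $b$ is strictly larger on top of the larger set $Y$ than on top of $X \subseteq Y$, contradicting the definition of submodularity and proving the theorem.

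The ideas here are routine; the main thing to get right is bookkeeping consistent with the model. One must check that the instance respects Definition~1 (boosting $a$ modifies edge $(a,b)$ and $d_a$, boosting $b$ modifies edge $(b,t)$ and $d_b$), that $T$ is genuinely large enough, and that the listed expectations are correct — the one subtle point being that in $\pi_{S,T}(\{a\})$ the node $b$ is reached with probability $p$ (not $p^2$) while $t$ is still reached only with probability $p^2$, which is exactly what creates the ``all or nothing'' jump once $b$ is also boosted. A brief remark could note that the gap persists for every partial boost $0 < b \le 1$, but a single value of $b$ already suffices.
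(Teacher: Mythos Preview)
Your proof is correct and follows the same approach as the paper: exhibit a small explicit counterexample in which boosting along a chain produces complementarity rather than diminishing returns. The paper's instance (Figure~\ref{ce}) uses edges with probabilities $0$ and $1$ and a fan-out at $u_4$ to make the marginal gains integers ($1$ versus $4$), whereas your four-node path with uniform edge probability $p\in(0,1)$ gives the cleaner closed-form gap $p^2(1-p) < p(1-p)$; both exploit exactly the same mechanism.
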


\begin{proof}
We will prove the lemma through a counter example.
Consider the simple graph shown in Figure \ref{ce}.
The activation probabilities of all edges are equal to 0, except from the activation probabilities
of the outgoing edges of $u_4$ which are set equal to 1.
Assume that the seed set includes only node $u_1$, i.e., $S$ = $\{u_1\}$, time $T$ is large enough so that the spread
is not influenced and $b$ is 1.
Take boost sets $B_1$ = $\{u_1\}$ and $B_2$ = $\{u_1, u_2\}$. Clearly, $B_1 \subseteq B_2$.
Then, $\pi_{S, T}(B_1  \cup \{u_3\})$  - $\pi_{S, T}(B_1)$ = 1,
while $\pi_{S, T}(B_2  \cup \{u_3\})$  - $\pi_{S, T}(B_2)$ = 4, which shows that $\pi_{S, T}(B)$ is
not submodular. 
\end{proof}

\section{Algorithms}
\label{alg}

In this section, we devise algorithms for selecting
the set of nodes to boost  so as to maximize the spread of influence.

\subsection{Basic Greedy}

A natural algorithm for the boost set selection problem is a greedy algorithm that works in $k$ steps as shown in
Algorithm \ref{alg:greedy}, selecting at each step to boost the node that 
causes the maximal marginal gain in influence spread (line 3).
\begin{algorithm}[t]
        \caption{General Greedy Algorithm}
        \small
        \label{alg:greedy}
        \begin{algorithmic}[1]
                \vspace{0.1cm}
                \STATE {initialize $B=\emptyset$}
                \FOR{$i$ = 1 \TO $k$} 
                        \STATE{$u$ =  $argmax_{\upsilon \in V \backslash B}$ ($\pi_{S, T}(B \cup \{\upsilon\})$ - $\pi_{S, T}(B))$  } 
                         \STATE $B$ = $B \cup \{u\}$ 
                \ENDFOR
        \end{algorithmic}
\end{algorithm}
The  gain is typically estimated by running a large number $R$ of simulations for each candidate node
as shown in Algorithm \ref{alg:greedy2}.
Graph $G^B$  is the graph   that results when the nodes in set $B$ are boosted,
that is, if we replace the activation probabilities 
and activation delays of the nodes in $B$ with the boosted probabilities and delays.
Instead of considering all nodes as candidates for boosting,
Algorithm \ref{alg:greedy2} works on the induced subgraph $G^c$ = $(V^c, E^c)$ of the original graph $G$ = $(V, E)$ that includes 
the nodes $V^c$ $\subseteq$ $V$ reachable within time $T$ from the nodes in the seed set $S$, if we assume  activation probabilities equal to one for all edges and 
zero delays for all nodes. Clearly, boosting nodes outside  $V^c$ does not affect
the spread, since these nodes are not reachable.\\
The computational complexity of Algorithm \ref{alg:greedy2} is O($knR$(cost of computing spread)), where $n$ is the number of nodes and $R$ is the number of simulations per node.
\begin{algorithm}[t]
	\caption{Simulation-based Greedy Algorithm}
	\small
	\label{alg:greedy2}
	\begin{algorithmic}[1]
		\vspace{0.1cm}
		\STATE {initialize $B=\emptyset$}
		\STATE{Let $V^c$ be the set of nodes reachable from $S$ within $T$}
		\FOR{$i$ = 1 \TO $k$}
		  \FORALL{$u$ in $V^c \backslash B$}
		    \STATE{s(u) = 0}
		    \FOR{$j$ = 1 \TO $R$}
		      \STATE{$s(u)$ += $compute\_\pi_{S, T}(G^{B \cup \{u\}})$}
		    \ENDFOR
		  \ENDFOR
		  \STATE{$u$ =  $argmax_{\upsilon \in V^c \backslash B} (s(\upsilon))$}
		  \STATE $B$ = $B \cup \{u\}$
		\ENDFOR
	\end{algorithmic}
\end{algorithm}
To compute the spread,  our Basic \Greedy or simply \Greedy uses the algorithm shown in Algorithm \ref{alg:greedy3} that is based on shortest path computations. 
To be able to use single source shortest path algorithms, we introduce a  virtual node $s^v$ and add edges from $s^v$ to all nodes $s$ in the seed set $S$
with activation probabilities equal to one and zero activation delays.
Algorithm \ref{alg:greedy3} estimates the number of activated nodes by computing shortest-path distances ($sp$) from this virtual seed $s^v$. The weight
of an edge $(u, v)$ corresponds to the delay introduced by the delay function $d_u$, while the edge $(u, v)$ is followed with probability $p_{uv}$.
A node is activated if its shortest path distance is smaller or equal to the time constraint $T$.
We use  $\Pi$ to denote the set of nodes that are active.\\
The cost of estimating the increase in spread using Dijkstra's algorithm is $O(mlogn)$ and thus the overall
complexity of \Greedy is $O(k n R  m logn)$.
In general \Greedy is prohibitively slow, especially since we cannot use
the CELF optimization \cite{LKGFVG07}. 
Thus, we also consider a simple variation of \Greedy
that runs only a single step of \Greedyx, orders nodes based on their gain and selects the top $k$ of them. We call this variation \GreedyBatchx. The complexity of \GreedyBatch is $O(n R  m logn)$.
Since the boost node selection problem is not submodular, we cannot deduce  the $(1-1/\epsilon)$ approximation bound for
the optimality of \Greedyx. However, \Greedy is still a natural algorithm that provides a comparison point for other approaches.

\begin{algorithm}[t]
        \caption{\Greedyx: $compute\_\pi_{S, T}(G^B$)}
        \small
        \label{alg:greedy3}
        \begin{algorithmic}[1]
        	\STATE{initialize $E^o$ = set of outgoing edges of $s^v$} 
        	\STATE{initialize active nodes $\Pi$ = $\emptyset$}
                \FORALL{$(u, v)$ in $E^o$}
					\STATE {initialize $sp[s^v, v]$ = 0}
					\STATE {initialize  $ap(s^v)$ = 1}
				\ENDFOR
                \WHILE{$E^o \not= \emptyset$ }
					\FORALL{$(u, v)$ in $E^o$}
						\STATE {draw $flag$ from Bernoulli($p_{u, v}$)}
						\IF{$flag = 0$}
						\STATE{ $E^o$ = $E^o \backslash \{(u, v)\}$}
						\ENDIF
						\STATE {draw flag $\delta(u)$ from $d_u$}
					\ENDFOR
					\STATE{$(u^a, v^a)$ =  $argmin_{(u, v) \in E^o}$ ($sp[s^v, u] + \delta(u))$} 
					\IF{$sp[s^v, u^a]$ + $\delta(u^a) <= T$ }
						\IF{$v^a$ not in $\Pi$}
			                \STATE{ $sp[s^v, v^a] = sp[s^v, u^a] + \delta(u^a)$}
			                \STATE{ $\Pi$ = $\Pi \cup \{v^a\}$}
							\FORALL{outgoing edges $(v^a, v)$ of $v^a$}
								\IF{$v$ not in $\Pi$}
					                \STATE{ $E^o$ = $E^o \cup \{v\}$}
								\ENDIF		
							\ENDFOR
						\ENDIF
					\ENDIF
	                \STATE{ $E^o$ = $E^o \backslash \{(u^a, v^a)\}$}
				\ENDWHILE
				\STATE{ return $|\Pi|$}
        \end{algorithmic}
\end{algorithm}

\begin{table}
	\begin{center}
		\caption{Input Parameters}
		\centering
		\small
		\begin{tabular}{| l | l | l |}
			\hline
			Description & Default & Range \\ \hline
			Time constraint ($T$) & 15  & 10-19   \\
			Size of boost set ($k$) & 5  & 1-10   \\
			Size of seed set ($|S|)$ & 2  &   1-10 \\
			Budget amount ($b$) & 0.1  &  0.01 - 0.225   \\
			\# of simulations $(R)$ & 10,000  &    \\
			Boost delay policy & 1st-tu  &  1st-tu,  \\
			&   &  2nd-tu, none \\
			Delay function $(d_{u})$ & Exponential  &  \\
			Exponential parameter $(\alpha)$  &  [0, 1] &   \\
			Model for assigning probabilities  & $wc$  &  Trivalency with \\ 
			&   &   0.05, 0.1, 0.15   \\ 
			&   &   based sets \\ 
			\# of independent paths ($\lambda$) & 2  &    \\
			\hline
		\end{tabular}
		\label{input}
	\end{center}
\end{table}

\subsection{Single Most Probable Activation Path Algorithm}
An effective way of approximating spread is
by assuming that influence propagation follows the most probable path \cite{CWW10}.
We apply this assumption to the boost set selection problem, first ignoring activation delays leading to a very fast algorithm and then improving the approximation by paying some extra computational cost
to incorporate time delays.

\noindent{\bf \em Ignoring Activation Delays.} 
Let us first define the propagation probability, $pp$, of a path $P$  = $(u_1, u_2,  \dots u_l)$,   
as $pp(P) = \prod_{i=1}^{l-1} p_{u_iu_{i+1}}$,
since  for an activation to be propagated through $P$, 
all nodes in $P$ need to be activated.
We denote with $t_{min}(P)$ the minimum
time for propagating an activation from $u_1$ to $u_l$.  
Then, for a graph $G(V, E)$, 
we define the maximum influence path, $MIP$, between two nodes $u$ and $v$ as: 
\begin{multline*}
MIP_{G, T}(u, v) = argmax_P \{pp(P) | P \text{ is a path from } u \text{ to } \\ v 
\text{ with } t_{min}(P) \leq T \text{ and  each subpath } 
P_w = (u, \dots, w)
\text{ of }  \\
P 
\text{ is a } MIP_{G, T}(u, w)\}
\end{multline*}


We assume that influence propagates through maximum influence paths starting from the seed nodes.
Thus, if $s^v$ is the virtual seed node, we assume that influence propagates through
a maximum influence tree ($MIT$) defined as:
\begin{equation*}
MIT_{G, T}(s^v) = \bigcup_{u \in V} MIP_{G, T}(s^v, u) 
\end{equation*}

We use the $MIT$ to estimate the gain of boosting a node.
The influence spread is equal to the sum of the activation probabilities $ap(w)$ of the 
nodes $w$ in the graph. The gain $g(u)$ in influence spread by boosting node $u$ is equal to the increase of these activation probabilities.
Note that if we boost a node, only the activation probabilities of its descendants in the MIT
are affected.
In our first algorithm, we assume that the activation
probability, $ap(w)$, of a node $w$ is equal to the propagation probability 
$pp$ of the single path in $MIT$ from $s^v$ to $w$.
Our algorithm uses the following lemma to estimate the gain.

\begin{lemma}
\label{lemma1}
Let $P$ = ($u_1$  $\dots$ $u_l$) be a path.
If we boost node $u_m$, $1$ $\leq$ $m<$ $l$, the gain in spread for the nodes of $P$ is equal to 
$(\frac{p'_{u_{m}u_{m+1}}} {p_{u_{m}u_{m+1}}} - 1)   \sum_{i=m+1}^{l}ap(u_i)$.
\end{lemma}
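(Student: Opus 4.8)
The plan is to track how the activation probability $ap(u_i)$ of each node on $P$ changes when $u_m$ is boosted, and then add up the per-node changes. Recall that under the single-path assumption of this algorithm $ap(w)$ is the propagation probability $pp$ of the unique path from the virtual seed $s^v$ to $w$ in the $MIT$; restricted to consecutive nodes of $P$ this gives the recurrence $ap(u_{i+1}) = ap(u_i)\,p_{u_i u_{i+1}}$, so that $ap(u_i) = ap(u_1)\prod_{j=1}^{i-1}p_{u_j u_{j+1}}$, where $ap(u_1)$ is the propagation probability of the part of the path lying above $u_1$. Since every edge has $p_{uv}>0$, the ratio appearing in the statement is well defined.

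First I would note that boosting $u_m$ changes only the outgoing edge probabilities of $u_m$ (and the delay function $d_{u_m}$, which is immaterial here because delays are ignored in this algorithm). Consequently the path from $s^v$ to any node $u_i$ with $i \le m$ uses no edge out of $u_m$, so $ap(u_i)$ is unchanged; in particular $ap(u_m)$ and the ``upstream'' factor $ap(u_1)$ are untouched. This is exactly why only the descendants $u_{m+1},\dots,u_l$ of $u_m$ on $P$ can contribute to the gain.

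Next, for $i \ge m+1$ the path from $s^v$ to $u_i$ traverses exactly one boosted edge, $(u_m,u_{m+1})$, whose probability becomes $p'_{u_m u_{m+1}}$. Writing $ap'$ for the post-boost activation probability, $ap'(u_i) = ap(u_m)\,p'_{u_m u_{m+1}}\prod_{j=m+1}^{i-1}p_{u_j u_{j+1}} = \frac{p'_{u_m u_{m+1}}}{p_{u_m u_{m+1}}}\,ap(u_i)$, hence the gain at $u_i$ equals $ap'(u_i)-ap(u_i) = \big(\frac{p'_{u_m u_{m+1}}}{p_{u_m u_{m+1}}}-1\big)ap(u_i)$. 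Summing over the nodes of $P$ whose activation probability changed, i.e.\ $i=m+1,\dots,l$, yields $g(u_m) = \big(\frac{p'_{u_m u_{m+1}}}{p_{u_m u_{m+1}}}-1\big)\sum_{i=m+1}^{l}ap(u_i)$, as claimed. There is no real obstacle here; the one point deserving care is verifying that nothing upstream of $u_{m+1}$ is affected by the boost, which we handled above, together with the observation that the single-path assumption makes $ap$ a plain product that factors cleanly through $p_{u_m u_{m+1}}$.
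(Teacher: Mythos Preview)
Your proof is correct and follows essentially the same approach as the paper: both arguments split the path at the boosted edge $(u_m,u_{m+1})$, observe that only this one factor changes in the product defining each downstream $ap(u_i)$, and conclude that $ap'(u_i)=\frac{p'_{u_m u_{m+1}}}{p_{u_m u_{m+1}}}\,ap(u_i)$ for $i\ge m+1$, then sum. Your write-up is slightly more explicit about why upstream nodes are unaffected and about the well-definedness of the ratio, but the underlying idea is identical.
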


\begin{proof}
The propagation probability of
paths $P_1$ = ($u_1$ $\dots$ $u_m$) and $P_2$ = ($u_{m+1}$ $\dots$ $u_{l}$)
remains the same. 
Let  $w$ be a node in $P_2$, $ap'(w)$ be its activation probability after the boost and $P_w$ be the path from 
$u_{m+1}$ to $w$. It holds, $ap(w)$ = $ap(u_m)$ $p_{u_{m}u_{m+1}}$ $pp(P_w)$.
It also holds, $ap'(w)$ = $ap(u_m)$ $p'_{u_{m}u_{m+1}}$ $pp(P_w)$, since the probability of path $P_w$ does not change.
Thus, the gain for $w$ is: $ap'(w)$ - $ap(w)$ = ($\frac{p'_{u_{m}u_{m+1}}} {p_{u_{m}u_{m+1}}} - 1)$ $ap(w)$ which proves the lemma.
\end{proof}

From Lemma \ref{lemma1}, we can calculate the overall gain $g(u)$, that is the increase in spread, when we boost a node $u$. Let $OUT(u)$ denote the 
children of $u$ in the $MIT$.
\begin{equation}
g(u) = \sum_{v \, \in \, OUT(u)}(\frac{p'_{uv}} {p_{uv}} - 1)\sum_{w \, descedant \, of \,  v} ap(w)
\label{gain}
\end{equation}

We are now ready to describe the {\em Most Probable Path Boost} (\MoBoox) algorithm, shown in Algorithm \ref{alg:moboo}.
\MoBoo starts by constructing the maximum influence tree, $MIT$, using Dijkstra's shortest path algorithm with $s^v$ as the source node and as weight for an edge $(u, v)$ the probability $p_{uv}$.
During the construction of the $MIT$, we compute and store with each node $u$ 
its activation probability $ap(u)$.
This step of \MoBoo has complexity $O(mlogn)$.
Then, we compute for each node  $u$, the gain $g(u)$ attained if we boost $u$. The computation of gain is done in a single bottom-up traversal of 
$MIT$ that computes the gain for each node based on the activation
probabilities of its descendants using Equation (\ref{gain}). The complexity of this step is
$O(n)$.
Boosting a node may change the shortest paths for some nodes in
which case we may need to rebuilt the $MIT$.
Thus, \MoBoo has time complexity  $O(kmlogn+kn)$ and requires $O(n)$ storage.
Rebuilding the $MIT$ is expensive, especially for dense networks. Thus, we use the same $MIT$ and adjusting  the activation probabilities of the descendants of the boosted node. This variation
of \MoBoo has complexity $O(mlogn + kn)$.

\begin{algorithm}[t]
  \caption{Most Probable Path Boost (\MoBoox) Algorithm}
      \small
     \label{alg:moboo}
  \begin{algorithmic}[1]
          \vspace{0.1cm}
  \STATE {set $B = \emptyset$}
  \STATE {compute $MIT_{G,T}(s^v)$ and the activation probabilities}\label{moboo:line:seedMioaCon}
		\FOR{$i$ = 1 \TO $k$} 
              	\FORALL{$u$ in $MIT_{G,T}(s^v)$$\backslash B$}
	            	    \STATE {compute $g(u)$} 
      \ENDFOR
              \STATE{$u$ =  $argmax_{\upsilon \in MIT(s^v)\backslash B}$ $g(\upsilon)$ }
             
            \STATE{ $B$ = $B \cup \{u\}$}\label{moboo:line:joinB}
 \STATE{update $MIT_{G,T}(s^v)$ and the activation probabilities}
          \ENDFOR
 \end{algorithmic}
\end{algorithm}

\noindent{\bf \em Incorporating Activation Delays.}
\MoBoo is very efficient but ignores the distribution of the activation delays 
$d_u(t)$ associated with
each node $u$, thus underestimates the gain through 
less probable but fast paths.
If we take into account the activation delays, for a path $P$  = $(u_1, u_2,  \dots u_l)$, 
the  probability that node $u_l$ is activated by $u_1$ through $P$  is 
$ap_{T}(P) = pp(P)p_{T}(P)$, where with
$p_{T}(P)$,  we denote the probability that $u_l$ is activated within time $T$ from the time 
$u_1$ is activated. We discuss later how to compute $p_{T}(P)$. Let as assume for now that
the complexity of its computation is $C_T$.
For a graph $G(V, E)$, we define the maximum time constrained influence path, $MTCIP$, between two nodes $u$ and $w$
\begin{multline}
MTCIP_{G, T}(u, v) = argmax_P \{ap_{T}(P) | P \text{ is a path} \\
	\text{from } u 
\text{ to } v\} \nonumber
\end{multline}
and the maximum time constrained influence tree ($MTCIT$): $
MTCIT_{G, T}(s^v) = \bigcup_{u \in V} MTCIP_{G, T}(s^v, u) $.

We assume that the diffusion process unfolds through the $MTCIT$ 
and consider that the activation probability
$ap(u)$ of node $u$ 
is equal with the activation probability 
$ap_T(P)$ of the single path $P$ in $MTCIT$ from $s^v$ to $u$, that is,
with $pp(P)p_{T}(P)$.

As with \MoBoox, the Time-Constrained Most Probable Path Algorithm (\TMoBoox), shown in
Algorithm \ref{alg:Tmoboo}, first
creates the $MTCIT$, and then
iteratively discovers the node with the maximum gain.
However, we can no longer use Lemma \ref{lemma1} to compute the gain
for all candidate nodes in a single bottom-up traversal.
Instead, we need to compute the gain for each candidate node $u$ by aggregating for each descendant $w$ of $u$ the increase $g(u, w)$ in its activation probability if $u$   
is boosted.
To do this efficiently, Algorithm \ref{alg:Tmoboo} computes $g(u, w)$ for each node $w$, for each node
$u$ in the path $P_{w}$ from the
root $s^v$ to $w$, and
accumulates this gain for $u$.  
This computation of gain for all candidates takes $O(nhC_T)$ time at each step of the algorithm, where $h$ is the height of the $MTCIT$, resulting in an overall
$O(mlognC_{T} + knhC_{T})$ complexity.

\begin{algorithm}[t]
	\caption{Time-constrained Most Probable Path Boost (\TMoBoox) Algorithm}
	\small
	\label{alg:Tmoboo}
	\begin{algorithmic}[1]
		\vspace{0.1cm}
		\STATE {set $B = \emptyset$}
		\STATE {compute $MTCIT_{G, T}(s^v)$ and the activation probabilities}\label{alg:Tmoboo:createMTCIT}
		\FOR{$i$ = 1 \TO $k$}
		\FORALL{$w$ in $MTCIT_{G, T}(s^v)\ \backslash B$}
		\STATE{Let $P_{w}$ be the path from $s^v$ to $w$}\label{alg:Tmoboo:ap_Tw}
		\FORALL{$u$ in $P_{w}$}
	\STATE {compute $g(u, w)$} \label{alg:Tmoboo:ap_uTw}
	\STATE { $g(u) += g(u, w)$}\label{alg:Tmoboo:g_w_u}
		\ENDFOR
		\ENDFOR
		\STATE{$u$ =  $argmax_{\upsilon \in MTCIT_{G, T}(s^v)\backslash B}$ $g(\upsilon)$ }
		\STATE{ $B$ = $B \cup \{u\}$}\label{alg:mobooT:line:joinB}
		\STATE{update $MTCIT_{G, T}(s^v)$} and the activation probabilities \label{alg:Tmoboo:updateMTCIT}
		\ENDFOR
	\end{algorithmic}
\end{algorithm}
Let us estimate the $p_T(P)$ quantity for path $P$ = ($u_1$, $u_2$, $\dots$  $u_l$).
Let $U_1$, $U_2$, $\dots$ $U_{l-1}$ be random variables with density functions
equal to $d_{u_1}$, $d_{u_2}$, $\dots$ $d_{u_{l-1}}$ respectively
and let  $TP$ = $U_1 + U_2 + \dots + U_{l-1}$ be their sum, which is also a random variable.
Then,  $p_T(P)$ = $Pr(TP \leq T)$. Since $U_1$, $U_2$, $\dots$ $U_{l-1}$ are independent random variables,
we can compute $p_T(P)$ by taking the convolution of their density functions.
In general, computing $p_T(P)$ is expensive.
By discretizing the delay functions, given that $T$ is measured in time units, 
it requires $O(h^T)$, where $h$ is the height of the tree, i.e., the
longest path. In addition, we use a rough approximation of $p_T(P)$ 
with $Pr(U_{l-1} \leq \frac{T}{l-1})$ and call the respected algorithms fast.
\vspace{-2.5mm}
\subsection{Maximum Influence Independent Paths Boost Algorithm}

\MoBoo and \TMoBoo simplify the graph structure by constructing a tree with the most probable
paths 
ignoring the real network structure. For example, we connect each node with at most one seed,
or  assume that a node can be activated only by a single other node.
In this section, we describe how to achieve a better approximation, 
by using more than one influence propagation path.

We define two paths as \textit{independent} if they have the same destination node
and the only other node that they may have in common is their starting node.
Our goal is to create $\lambda$ independent paths for each node $w \in V$ such as the starting node will be one 
of the seeds and the ending node the node $w$. Additionally, we want these paths to capture the maximum 
influence of any of the seeds to $w$.
To this end, we need to be able to rank all the paths ending to a node starting from any of the seeds based on their ability to influence the ending node.
Until now, we have seen two methods in which we can choose one path
among a set of paths, used in $MIP$ and $MTCIP$ construction, which express the probability for the starting node to
activate the last node.
No matter the ranking method, denoted as $RM$, we can iteratively apply it $\lambda$ times
with the constraint introduced by the definition of the independent paths to choose the first $\lambda$ independent paths.
Let $P^{1}(s^{\upsilon},w)$ where
\begin{equation*}  
P^{1}(s^{\upsilon},w) = argmax_{P}\{RM(P) | P \text{ is a path from } s^{\upsilon} \text{ to } w\}
\end{equation*}  
be the first independent path from $s^{\upsilon}$ to $w$.
Then, the second path $P^{2}(s^{\upsilon},w)$ is created as: 
\begin{multline*}  
P^{2}(s^{\upsilon},w) = argmax_{P}\{RM(P) | P \text{ is a path from } s^{\upsilon} \text{ to } w 
\text{, } \\ 
P^{2}\cap P^{1} \subseteq (\{s^{\upsilon}\} 
\cup  S \cup \{ w\}) \}
\end{multline*}
where $P^{i} \cap P^{j}$ is the set of their common nodes.
In general, we define:
\begin{multline*}  
P^{i}(s^{\upsilon},w) = argmax_{P}\{RM(P) | P \text{ is a path from } s^{\upsilon} \text{ to } w 
\text{, } \\
P^{i}\cap (P^{1}\cup P^{2}\cup...P^{i-1}) \subseteq (\{s^{\upsilon}\} 
\cup  S \cup \{ w\}) \}
\end{multline*}
where $P^{i} \cup P^{j}$ is the union set of their nodes.
Finally, we define the $\lambda$ Most Ranking Independent Paths $MRIPs$ for node $w$ as:
\begin{equation*}  
\mu^{\lambda}(w) = \cup_{i=1}^{\lambda}P^{i}(s^{\upsilon},w)
\end{equation*}  

In this case, instead of a single tree, we maintain $\lambda$ independent paths per node
and use them to estimate the activation probabilities.
Assuming that every node $w$ can be activated only though its independent paths $\mu^{\lambda}(w)$,
its activation probability $ap(w)$ is estimated as:
\begin{equation*}  
ap(w) = 1 - \prod_{P^{i}(s^{\upsilon}, w)\ \in\ \mu^{\lambda}(w)}(1-RM(P^{i}(s^{\upsilon}, w)))
\end{equation*}


%
%

To estimate the gain, the {\em Maximum Independent Paths} algorithm first constructs 
 $\lambda$ independent paths per node and then uses them  
to evaluate the gain.
In particular,  it runs $k$ iterations. At each iteration, for each node $w$, it computes the gain $g(u, w)$ for each node
$u$ in  each $\mu^{\lambda}(w)$ and adds it to $g(u)$. At the end of the iteration,
it selects the node with the largest $g(u)$ and updates the paths.
The complexity depends on which $RM$ is used.
When $RM$ is based on the Most Probable Path (used in the $MIP$ construction), we need 
$O( n \lambda mlogn)$ time to build the paths, since we need to run the shortest path algorithm once per node. If the maximum length among all independent paths is $|P|$, the selection of
boost nodes requires $O(k n \lambda |P|)$. Thus, the overall time is $O(k n \lambda mlogn)$.
 We call this version of the algorithm \SimpleMIIPsx. 
When the $RM$ is based on the activation probability with time delays,
an additional cost $C_T$ is introduced resulting in an $O(k n \lambda mlogn C_{T})$
overall complexity. We call this version of the algorithm \TMIIPs.

\subsection{Algorithms based on Proximity to the Seeds}
A simple approach to the problem is to select as boost nodes those nodes that are the closest to the seed nodes.
The reason is that boosting these nodes affects a large number of other nodes.
We  propose three intuitive algorithms that differ on how distance is defined.
The {\em shortest-path time-based} algorithm (\SPTx) defines the distance between two nodes $u$ and $v$ based on the time delay of the  diffusion between them.
The {\em shortest-path probability-based} algorithm (\SPPx) defines the distance between two nodes $u$ and $v$ based on the propagation probability between them.
Finally, the {\em shortest-path hop-based} algorithm (\SPHx) defines the distance between two nodes $u$ and $v$ based on the number of hops on
the propagation path between them.
To estimate the distances, we run $R$ simulations.
At each simulation, the distance of a node from the seed nodes in $S$ is
calculated as
$distance(u, S)$ = $min_{s \in S} distance(u, s)$, where $distance$ is 
time-based, probability-based or hop-based depending on the algorithm employed.
Then, we select the top-$k$ nodes with the best $avg(distance(u, S))$.
The complexity of the path-based algorithms is 
$O(Rmlogn)$, since at each simulation, we use Dijkstra's algorithm to
compute the shortest path distances.

We also develop corresponding faster algorithms that use the maximum influence trees (the $MIT$ or, $MTCIT$) to estimate the
distance from the seeds 
again based on time, probability and hops  resulting in the \SPTMITx, \SPPMIT and \SPHMIT  (resp., \SPTMTCITx, \SPPMTCITx and \SPHMTCITx) algorithms.
For the \textsf{SPT} and \textsf{SPH} variants, we need to build the trees, thus their complexity is that of building the corresponding tree.
For the \textsf{SPP} variant, we do not need to build the 
whole tree; we stop as soon as, we find the first $k$ nodes, thus the complexity is 
$O(mlogk)$ (resp., $O(mlogkC_T)$).

Finally, we employ a {\em last-node} (\LastNode) algorithm that  selects to boost the
nodes at which diffusion stops most of the times. The intuition is that by boosting these nodes, the diffusion will continue and  additional nodes will be reached.
Again, we use $R$ simulations to locate these nodes. The complexity of this algorithm is $O(Rmlogn)$.
In case of ties, in all cases we select the nodes with the largest out-degree. 

\section{Experimental Evaluation}
\label{exp}
In this section, we present an evaluation of our approach. The goal of the evaluation is twofold.
First, we present a comparison of our algorithms for the boost selection problem in terms 
of their execution time and the quality of the computed solutions.
Second, we evaluate the effect of boosting for a variety of networks, seed selections and boosting parameters.
\subsection{Experimental Setup}
\noindent{\bf Input Parameters} For assigning probabilities, we use a weighted cascade model ($wc$), in which the probability $p_{uv}$ of an edge $(u, v)$ is set equal to $1/d$ where $d$ is the in-degree of node $v$ \cite{KKT03,CWW10,LCXZ12}.
We also use a trivalency model \cite{JHC12,CWW10,GBL2011}.
For each edge, we uniformly at random select a value from the set \{0.1, 0.01, 0.001\}, which corresponds to high, medium and low influences. 
For the delay function,we use an exponential model that is commonly used for modelling diffusion  \cite{ML2010,GRLK2012}. The  $\alpha$ parameter of the delay function for each node is selected 
uniformly at random from [0, 1].
We have experimented with various boosting parameters and report related experiments. As default, we use two seeds and select $k$ = 5 nodes for boosting. For the amount of boosting,we use $b$ = 0.1 as the default value. 
Boosting a node in addition to increasing the activation probabilities may  affect its delay function.
We use as default a ``1st-tu'' policy for speeding-up diffusion. With this strategy, we increase by $b$ the probability that a node responds within the 1st time unit.
We also use a ``2nd-tu'' strategy in which  we increase by $b$ the probability that a node responds within the 2nd time unit.
Other strategies are possible as well as long as the node reacts faster after boosting.
Our  input parameters are summarized in Table \ref{input}.

\begin{table*}
    \begin{minipage}{.7\linewidth}
    \caption{Execution time in ms (average)}
    \centering
	\begin{tabular}{ |c|c|c|c|c|c|c|c|c| }  	\hline
		&	 \multicolumn{2}{c|}{\textit{Wiki}} &	\multicolumn{2}{c|}{\textit{Epinions}} & \multicolumn{2}{c|}{\textit{Slashdot}} &
		\multicolumn{2}{c|}{\textit{DBLP}} 
		\\ \hline
		&	trv &	wc  &	trv	& wc	& trv &	wc &	trv &	wc \\
		\MoBoox &	254 &	80 &	918.5 &	1388 &	2700.2	& 4897 &	4.6	& 12 \\
		\TMoBoox &	10400 &	16026 &	1235364 &	806084 &	5690069	& 3370758 &	199	& 271 \\
		\FastTMoBoox &	4731 &	11833 &	504420 &	369973 &	2255074	& 1664951 &	128	& 150 \\
		\SimpleMIIPsx &	438365 &	103432 &	6173489 &	6240079 &	8768254	& 7048651 &	2610	& 2028 \\
		\FastTMIIPsx &	1021286 &	706734 &	25456842 &	24695487 &	51153953	& 51128653 &	6562	& 5310 \\
		\LastNode	& 5961.7 &	2585.5 &	8216.9	& 297.1	& 5313	& 436 &	788.6 &	105 \\
		\SPTMITx &	73.5 &	23.1 &	529.6	& 208.5	& 1125.9 &	497.5 &	1.6	 & 1 \\
		\SPPMITx &	0 &	0 &	0 &	0 &	0 &	0	& 0	 & 0 \\
		\SPHMITx &	78.4 &	24.2 &	568.2 &	200.8 &	1170 &	499.4 &	1.6	& 1.1 \\
		\SPTMTCITx &	4533 &	5195 &	8195	& 5236	& 50183 &	23365 &	74	& 7 \\
		\SPPMTCITx &	0 &	0 &	0	& 0	& 0 &	0 &	0	& 0 \\
		\SPHMTCITx &	4429 &	5063 &	8025	& 4865	& 48519 &	22574 &	132	&	5 \\ \hline
	\end{tabular}
	\label{alg-time}
    \end{minipage}%
    \begin{minipage}{.3\linewidth}
      	\centering
    	\caption{Execution time in hours (average) for the wc model}
		\begin{tabular}{ |c|c|c| }  	\hline
			& \Greedyx & \GreedyBatchx \\ \hline
			\textit{Wiki}&	34.5 &	5.26 \\
			\textit{Epinions} &	28.13 &	0.45 \\
			\textit{Slashdot} &	82.80	 & 13.05 \\
			\textit{DBLP} &	0.77 &	0.29 \\ \hline
		\end{tabular}
		\label{greedy-time}
      	\centering
    	\caption{Dataset Statistics}
    	\begin{tabular}{| l | l | l | l |}
    		\hline
    		& Nodes & Edges &  avg degree \\ \hline
    		\textit{Wiki} & 7,115 & 103,689 & 14.57\\ 
    		\textit{Epinions} & 75,879 & 508,837 & 6.705 \\ 
    		\textit{Slashdot} & 77,360 & 905,468 & 11.704 \\ 
    		\textit{DBLP} & 244,269 & 591,063 & 2.419 \\ 
    		\hline
   	\end{tabular}
	\label{datasets}
    \end{minipage} 
\end{table*}
%
\begin{figure*}
	\flushleft
	\subfloat[\textit{DBLP}]{
		\begin{tikzpicture}[scale=0.65]
		  \begin{axis}[xlabel=T,ylabel=SPREAD, ymin=6, ymax=17, xticklabels from table={dblpWC.txt}{T}, xtick=data]
		    \addplot[mark=square*, mark size = 2] table [x=T,y=NoBoost] {dblpWC.txt};
		    \addplot[blue,mark=square*,fill=white, mark size = 2] table [x=T,y=MoBoo] {dblpWC.txt};
		    \addplot[orange,mark=*,fill=white,mark size = 2] table [x=T,y=TMoBoo] {dblpWC.txt};
		    \addplot[red,mark=triangle*,fill=white,mark size = 2] table [x=T,y=FastTMoBoo] {dblpWC.txt};
		    \addplot[magenta,mark=star,fill=white,mark size = 2] table [x=T,y=MIIPs] {dblpWC.txt};
		    \addplot[cyan,mark=diamond,fill=white,mark size = 2] table [x=T,y=FastTMIIPs] {dblpWC.txt};
		    \addplot[olive,mark=oplus*,fill=white,mark size = 2] table [x=T,y=Greedy] {dblpWC.txt};
		    \addplot[violet,mark=Mercedes star,mark size = 2] table [x=T,y=GreedyBatch] {dblpWC.txt};
		  \end{axis}
		\end{tikzpicture}
	}
	\subfloat[\textit{Epinions}]{
		\begin{tikzpicture}[scale=0.65]
		  \begin{axis}[xlabel=T,ymin=3, ymax=25, xticklabels from table={epinionsWC.txt}{T}, xtick=data]
		    \addplot[mark=square*, mark size = 2] table [x=T,y=NoBoost] {epinionsWC.txt};
		    \addplot[blue,mark=square*,fill=white, mark size = 2] table [x=T,y=MoBoo] {epinionsWC.txt};
		    \addplot[orange,mark=*,fill=white,mark size = 2] table [x=T,y=TMoBoo] {epinionsWC.txt};
		    \addplot[red,mark=triangle*,fill=white,mark size = 2] table [x=T,y=FastTMoBoo] {epinionsWC.txt};
		    \addplot[magenta,mark=star,fill=white,mark size = 2] table [x=T,y=MIIPs] {epinionsWC.txt};
		    \addplot[cyan,mark=diamond,fill=white,mark size = 2] table [x=T,y=FastTMIIPs] {epinionsWC.txt};
		    \addplot[olive,mark=oplus*,fill=white,mark size = 2] table [x=T,y=Greedy] {epinionsWC.txt};
		    \addplot[violet,mark=Mercedes star,mark size = 2] table [x=T,y=GreedyBatch] {epinionsWC.txt};
		  \end{axis}
		\end{tikzpicture}
	}
	\subfloat[\textit{Slashdot}]{
		\begin{tikzpicture}[scale=0.65]
		  \begin{axis}[xlabel=T,ymin=4, ymax=55, xticklabels from table={slashdotWC.txt}{T}, xtick=data,
			       legend cell align=left, legend columns=1, legend style={at={(1.05,0.64)},anchor=west,font=\normalsize, mark size=13pt}]
		    \addplot[mark=square*, mark size = 2] table [x=T,y=NoBoost] {slashdotWC.txt};
		    \addplot[blue,mark=square*,fill=white, mark size = 2] table [x=T,y=MoBoo] {slashdotWC.txt};
		    \addplot[orange,mark=*,fill=white,mark size = 2] table [x=T,y=TMoBoo] {slashdotWC.txt};
		    \addplot[red,mark=triangle*,fill=white,mark size = 2] table [x=T,y=FastTMoBoo] {slashdotWC.txt};
		    \addplot[magenta,mark=star,fill=white,mark size = 2] table [x=T,y=MIIPs] {slashdotWC.txt};
		    \addplot[cyan,mark=diamond,fill=white,mark size = 2] table [x=T,y=FastTMIIPs] {slashdotWC.txt};
		    \addplot[olive,mark=oplus*,fill=white,mark size = 2] table [x=T,y=Greedy] {slashdotWC.txt};
		    \addplot[violet,mark=Mercedes star,mark size = 2] table [x=T,y=GreedyBatch] {slashdotWC.txt};
       		    \addlegendentry{NoBoost}
		    \addlegendentry{MoBoo}
		    \addlegendentry{TMoBoo}
		    \addlegendentry{FastTMoBoo}
		    \addlegendentry{MIIPs}
		    \addlegendentry{FastTMIIPs}
		    \addlegendentry{Greedy}
		    \addlegendentry{GreedyBatch}
		  \end{axis}
		\end{tikzpicture}
	}
	\caption{Spread of the spread-based approximation algorithms for various datasets and the wc model}
	\label{wc-spreadBased-Datasets}
\end{figure*}
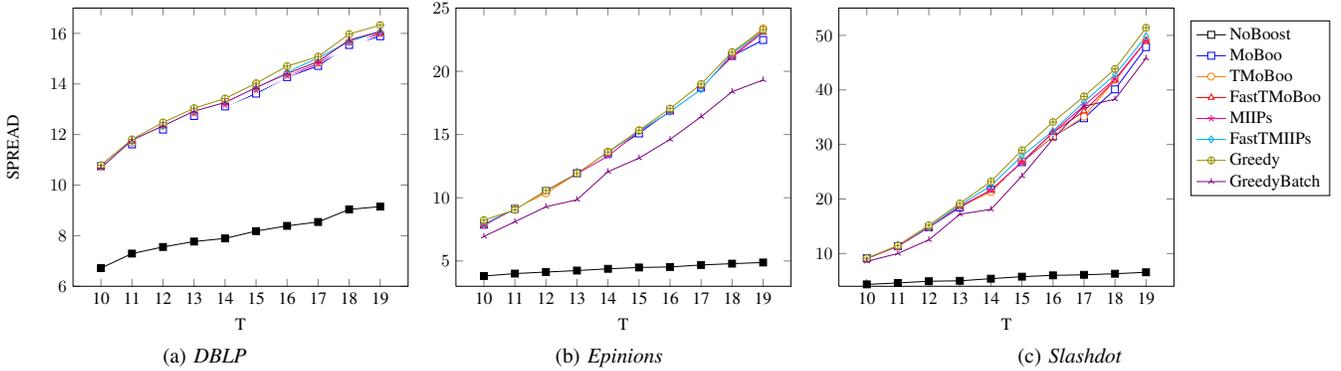

\begin{figure*}
	\flushleft
	\subfloat[\textit{Spread approximation algorithms}]{
		\begin{tikzpicture}[scale=0.65]
		  \begin{axis}[xlabel=T,ylabel=SPREAD, ymin=8, ymax=70, xticklabels from table={wikiWC.txt}{T}, xtick=data,
			       legend cell align=left, legend columns=2, legend style={at={(0.25,0.3)},anchor=west,font=\normalsize, mark size=13pt, draw=none}]
		    \addplot[mark=square*, mark size = 2] table [x=T,y=NoBoost] {wikiWC.txt};
		    \addplot[blue,mark=square*,fill=white, mark size = 2] table [x=T,y=MoBoo] {wikiWC.txt};
		    \addplot[orange,mark=*,fill=white,mark size = 2] table [x=T,y=TMoBoo] {wikiWC.txt};
		    \addplot[red,mark=triangle*,fill=white,mark size = 2] table [x=T,y=FastTMoBoo] {wikiWC.txt};
		    \addplot[magenta,mark=star,fill=white,mark size = 2] table [x=T,y=MIIPs] {wikiWC.txt};
		    \addplot[cyan,mark=diamond,fill=white,mark size = 2] table [x=T,y=FastTMIIPs] {wikiWC.txt};
		    \addplot[olive,mark=oplus*,fill=white,mark size = 2] table [x=T,y=Greedy] {wikiWC.txt};
		    \addplot[violet,mark=Mercedes star,mark size = 2] table [x=T,y=GreedyBatch] {wikiWC.txt};
      		    \addlegendentry{NoBoost}
		    \addlegendentry{MoBoo}
		    \addlegendentry{TMoBoo}
		    \addlegendentry{FastTMoBoo}
		    \addlegendentry{MIIPs}
		    \addlegendentry{FastTMIIPs}
		    \addlegendentry{Greedy}
		    \addlegendentry{GreedyBatch}
		  \end{axis}
		\end{tikzpicture}
	}
	\subfloat[\textit{Distance based algorithms}]{
		\begin{tikzpicture}[scale=0.65]
		  \begin{axis}[xlabel=T,ymin=5, ymax=70, xticklabels from table={wikiWC.txt}{T}, xtick=data,
			       legend cell align=left, legend columns=2, legend style={at={(0.21,0.35)},anchor=west,font=\normalsize, mark size=13pt, draw=none}]
		    \addplot[mark=square*, mark size = 2] table [x=T,y=NoBoost] {wikiWC.txt};
		    \addplot[orange,mark=*,fill=white,mark size = 2] table [x=T,y=TMoBoo] {wikiWC.txt};
		    \addplot[blue,mark=square*,fill=white, mark size = 2] table [x=T,y=SPP-MTCIT] {wikiWC.txt};
		    \addplot[red,mark=triangle*,fill=white,mark size = 2] table [x=T,y=SPT-MTCIT] {wikiWC.txt};
		    \addplot[magenta,mark=star,fill=white,mark size = 2] table [x=T,y=SPH-MTCIT] {wikiWC.txt};
		    \addplot[cyan,mark=diamond,fill=white,mark size = 2] table [x=T,y=SPP-MIT] {wikiWC.txt};
		    \addplot[olive,mark=oplus*,fill=white,mark size = 2] table [x=T,y=SPT-MIT] {wikiWC.txt};
		    \addplot[violet,mark=Mercedes star,mark size = 2] table [x=T,y=SPH-MIT] {wikiWC.txt};
		    \addplot[green,mark=diamond,mark size = 2] table [x=T,y=LastNode] {wikiWC.txt};
      		    \addlegendentry{NoBoost}
		    \addlegendentry{TMoBoo}
		    \addlegendentry{SPP-MTCIT}
		    \addlegendentry{SPT-MTCIT}
		    \addlegendentry{SPH-MTCIT}
		    \addlegendentry{SPP-MIT}
		    \addlegendentry{SPT-MIT}
		    \addlegendentry{SPH-MIT}
		    \addlegendentry{LastNode}
		  \end{axis}
		\end{tikzpicture}
	}
	\subfloat[\textit{Spread approximation algorithms - TRV}]{
		\begin{tikzpicture}[scale=0.65]
		  \begin{axis}[xlabel=T,ymin=10, ymax=90, xticklabels from table={wikiTRV.txt}{T}, xtick=data,
			       legend cell align=left, legend columns=1, legend style={at={(0.55,0.43)},anchor=west,font=\normalsize, mark size=13pt, draw=none}]
		    \addplot[mark=square*, mark size = 2] table [x=T,y=NoBoost] {wikiTRV.txt};
		    \addplot[blue,mark=square*,fill=white, mark size = 2] table [x=T,y=MoBoo] {wikiTRV.txt};
		    \addplot[orange,mark=*,fill=white,mark size = 2] table [x=T,y=TMoBoo] {wikiTRV.txt};
		    \addplot[red,mark=triangle*,fill=white,mark size = 2] table [x=T,y=FastTMoBoo] {wikiTRV.txt};
		    \addplot[magenta,mark=star,fill=white,mark size = 2] table [x=T,y=MIIPs] {wikiTRV.txt};
		    \addplot[cyan,mark=diamond,fill=white,mark size = 2] table [x=T,y=FastTMIIPs] {wikiTRV.txt};
       		    \addlegendentry{NoBoost}
		    \addlegendentry{MoBoo}
		    \addlegendentry{TMoBoo}
		    \addlegendentry{FastTMoBoo}
		    \addlegendentry{MIIPs}
		    \addlegendentry{FastTMIIPs}
		  \end{axis}
		\end{tikzpicture}
	}
	\caption{Spread of the \textit{Wiki} dataset}
	\label{spread-wiki}
\end{figure*}
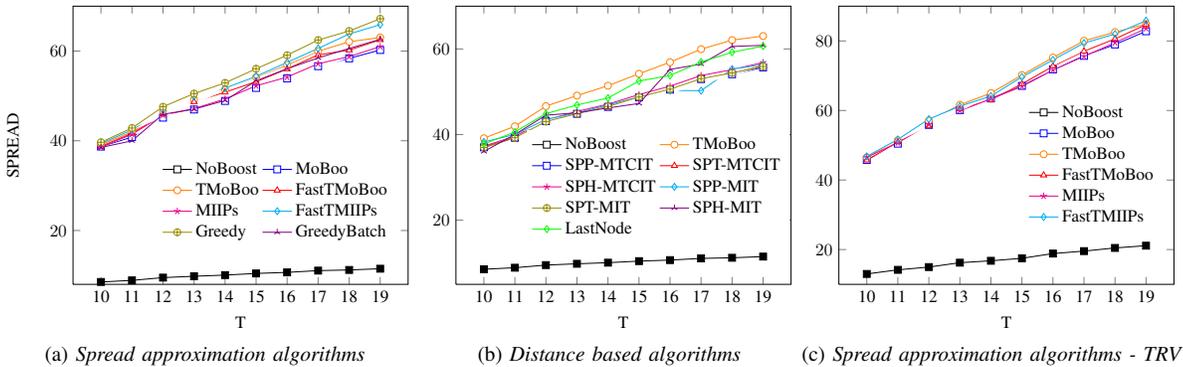


\begin{figure*}
	\flushleft
	\subfloat[\textit{DBLP}]{
		\begin{tikzpicture}[scale=0.65]
		  \begin{axis}[xlabel=T,ylabel=SPREAD, ymin=6, ymax=17, xticklabels from table={dblpWC.txt}{T}, xtick=data]
		    \addplot[mark=square*, mark size = 2] table [x=T,y=NoBoost] {dblpWC.txt};
		    \addplot[orange,mark=*,fill=white,mark size = 2] table [x=T,y=TMoBoo] {dblpWC.txt};
		    \addplot[blue,mark=square*,fill=white, mark size = 2] table [x=T,y=SPP-MTCIT] {dblpWC.txt};
		    \addplot[red,mark=triangle*,fill=white,mark size = 2] table [x=T,y=SPT-MTCIT] {dblpWC.txt};
		    \addplot[magenta,mark=star,fill=white,mark size = 2] table [x=T,y=SPH-MTCIT] {dblpWC.txt};
		    \addplot[cyan,mark=diamond,fill=white,mark size = 2] table [x=T,y=SPP-MIT] {dblpWC.txt};
		    \addplot[olive,mark=oplus*,fill=white,mark size = 2] table [x=T,y=SPT-MIT] {dblpWC.txt};
		    \addplot[violet,mark=Mercedes star,mark size = 2] table [x=T,y=SPH-MIT] {dblpWC.txt};
		    \addplot[green,mark=diamond,mark size = 2] table [x=T,y=LastNode] {dblpWC.txt};
		  \end{axis}
		\end{tikzpicture}
	}
	\subfloat[\textit{Epinions}]{
		\begin{tikzpicture}[scale=0.65]
		  \begin{axis}[xlabel=T,ymin=2, ymax=24, xticklabels from table={epinionsWC.txt}{T}, xtick=data]
		    \addplot[mark=square*, mark size = 2] table [x=T,y=NoBoost] {epinionsWC.txt};
		    \addplot[orange,mark=*,fill=white,mark size = 2] table [x=T,y=TMoBoo] {epinionsWC.txt};
		    \addplot[blue,mark=square*,fill=white, mark size = 2] table [x=T,y=SPP-MTCIT] {epinionsWC.txt};
		    \addplot[red,mark=triangle*,fill=white,mark size = 2] table [x=T,y=SPT-MTCIT] {epinionsWC.txt};
		    \addplot[magenta,mark=star,fill=white,mark size = 2] table [x=T,y=SPH-MTCIT] {epinionsWC.txt};
		    \addplot[cyan,mark=diamond,fill=white,mark size = 2] table [x=T,y=SPP-MIT] {epinionsWC.txt};
		    \addplot[olive,mark=oplus*,fill=white,mark size = 2] table [x=T,y=SPT-MIT] {epinionsWC.txt};
		    \addplot[violet,mark=Mercedes star,mark size = 2] table [x=T,y=SPH-MIT] {epinionsWC.txt};
		    \addplot[green,mark=diamond,mark size = 2] table [x=T,y=LastNode] {epinionsWC.txt};
		  \end{axis}
		\end{tikzpicture}
	}
	\subfloat[\textit{Slashdot}]{
		\begin{tikzpicture}[scale=0.65]
		  \begin{axis}[xlabel=T,ymin=3, ymax=50, xticklabels from table={slashdotWC.txt}{T}, xtick=data,
			       legend cell align=left, legend columns=1, legend style={at={(1.05,0.64)},anchor=west,font=\normalsize, mark size=13pt}]
		    \addplot[mark=square*, mark size = 2] table [x=T,y=NoBoost] {slashdotWC.txt};
		    \addplot[orange,mark=*,fill=white,mark size = 2] table [x=T,y=TMoBoo] {slashdotWC.txt};
		    \addplot[blue,mark=square*,fill=white, mark size = 2] table [x=T,y=SPP-MTCIT] {slashdotWC.txt};
		    \addplot[red,mark=triangle*,fill=white,mark size = 2] table [x=T,y=SPT-MTCIT] {slashdotWC.txt};
		    \addplot[magenta,mark=star,fill=white,mark size = 2] table [x=T,y=SPH-MTCIT] {slashdotWC.txt};
		    \addplot[cyan,mark=diamond,fill=white,mark size = 2] table [x=T,y=SPP-MIT] {slashdotWC.txt};
		    \addplot[olive,mark=oplus*,fill=white,mark size = 2] table [x=T,y=SPT-MIT] {slashdotWC.txt};
		    \addplot[violet,mark=Mercedes star,mark size = 2] table [x=T,y=SPH-MIT] {slashdotWC.txt};
		    \addplot[green,mark=diamond,mark size = 2] table [x=T,y=LastNode] {slashdotWC.txt};
      		    \addlegendentry{NoBoost}
		    \addlegendentry{TMoBoo}
		    \addlegendentry{SPP-MTCIT}
		    \addlegendentry{SPT-MTCIT}
		    \addlegendentry{SPH-MTCIT}
		    \addlegendentry{SPP-MIT}
		    \addlegendentry{SPT-MIT}
		    \addlegendentry{SPH-MIT}
		    \addlegendentry{LastNode}
		  \end{axis}
		\end{tikzpicture}
	}
	\caption{Spread of the distance-based algorithms for various datasets and the wc model}
	\label{wc-seedDistance-Datasets}
\end{figure*}
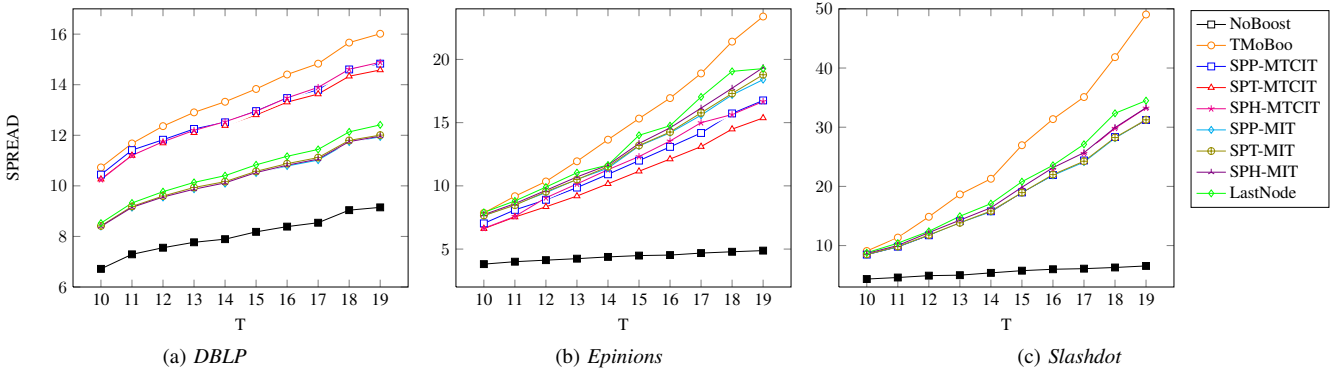

We use the following real datasets\footnote{\url{http://snap.stanford.edu},\url{http://dblp.uni-trier.de/db/}}: (1) {\it Wiki}
where an edge between two Wikipedia users indicates that a user was voted by another user, 
(2) {\it Epinions},  a who-trust-whom social network of the consumer review site Epinions.com, (3) {\it Slashdot},  a social network from the Slashdot technology-related news website, and (4) a subset of
{\it DBLP} where there is a bi-directional edge between two nodes if they were co-authors of an article in 2009 or 2010.  
The basic characteristics of the four datasets are summarized in Table \ref{datasets}.


\subsection{Comparison of the Algorithms}

In the first set of experiments, we compare the various algorithms for the boost selection problem.
We group the proposed algorithms in two categories: (a) those based on approximating the calculation of spread 
through most probable paths (\MoBoox, \TMoBoox, \FastTMoBoox, \SimpleMIIPsx, \FastTMIIPsx, \Greedyx, \GreedyBatchx) and 
those based on the distance from the seed (\SPTMITx, \SPPMITx, \SPHMITx, \SPTMITx, \SPTMTCITx, \SPPMTCITx, \SPHMTCITx, \LastNodex).
We do not report results for \SPTx, \SPTx, \SPHx\ as their spread is much lower than all the others. 

In Figures \ref{wc-spreadBased-Datasets} and \ref{spread-wiki}(a) we report the achieved spread with the wc model of the algorithms based on the spread approximation, while in 
Figures \ref{spread-wiki}(b) and \ref{wc-seedDistance-Datasets} the achieved spread of the algorithms based on the distance from the seed. 
In addition, we present for comparison, the  spread when no nodes are boosted (denoted as $NoBoost$).
We also experimented with a random selection of boost nodes, as well as, with selecting the nodes with the largest degree: both algorithms have spread comparable to that of $NoBoost$.

As expected, as we use more sophisticated spread estimation
algorithms, we get improved spread, however, the improvement
is small. The improvement increases when the networks are dense.
By taking into account time delays, \TMoBoox \, and \FastTMoBoox \,
achieve better spread than the attractive \MoBoox\ by paying a significant increase in time complexity.
The approximation used by \FastTMoBoox \, to estimate time delays is very effective.
\MIIPs that uses multiple paths but ignores delays is in generally less effective and much slower than \FastTMoBoox \,, indicating that incorporating time is more important than
considering additional influence propagation paths.
Regarding the distance-based algorithms, the \TMoBoox\ algorithm outperforms them in all cases.
Among them, \LastNode \, (an implementation of the physical meaning of ``increase the capacity of a bottleneck conduit'') appears to achieve the larger spread.
In general, algorithms based on distance fail to handle cases where boosting nodes further from a seed may result in better spread, for example,
by connecting some other seed with a larger number of nodes, or due to the distribution of the probabilities.
Furthermore, their performance is not consistent for different datasets, since they are based on rough estimations.
Figure \ref{spread-wiki}(c) compares the spread approximation algorithms for the trivalency model.
The relative performance of the algorithms remains the same but the differences are even smaller.

Table \ref{alg-time} depicts the running time of all algorithms. 
\MoBoo is much faster than the simulation based algorithms.
Under the trivalency model, the difference is more than one order of magnitude. 
With the $wc$ model, the execution time of \MoBoo increases for the large
datasets. The reason is that the $wc$ model assigns on average smaller activation probabilities on edges.
Thus, simulation-based algorithms are faster, since the spread is smaller.
Comparing \MoBoo and \MIIPs, we notice the expected large difference in their time complexity.
Incorporating the delay functions increases the running time at least one order of magnitude.
\TMIIPs algorithm can not be used in practice and its fast version needs at least five times more time than its simple version.
The \SPPMITx\ (\SPPMTCITx) has negligible execution time, since it does not
wait for the full $MIT$ ($MTCIT$) construction, but terminates when $k$ nodes added to the $MIT$.
Because the execution time of \Greedy is significant large (Table \ref{greedy-time}), we conducted experiments using only the weighted cascade model. 
Notice also  that the running time of \Greedy is not just $k$ times larger than the running time of \GreedyBatch.
This is because, any new node that joins the boost set results in increasing the spread for next candidate and the time needed for selecting a new node is always larger
than that of the previous selection.

\subsection{Evaluation of Boosting}
In this set of experiments, we present results regarding the various parameters of boosting.
For computing the set of nodes to boost, we use the \TMoBoox \ algorithm. We run our experiments on all datasets 
but report results for the \textit{Wiki} and \textit{Epinions} datasets, since the results for the other two datasets are similar.\\
\begin{figure*}
	\flushleft
	\subfloat[\textit{Wiki-Number of seeds}]{
		\begin{tikzpicture}[scale=0.65]
		  \begin{axis}[xlabel=\# OF SEEDS,ylabel=SPREAD-(\textit{Wiki}), ymin=5, ymax=135, xticklabels from table={wiki_Seeds_k_Spread.txt}{OfSeeds}, xtick=data]
		    \addplot[cyan,mark=diamond,fill=white,mark size = 2] table [x=OfSeeds,y=k10] {wiki_Seeds_k_Spread.txt};
		    \addplot[magenta,mark=star,fill=white,mark size = 2] table [x=OfSeeds,y=k7] {wiki_Seeds_k_Spread.txt};
		    \addplot[red,mark=triangle*,fill=white,mark size = 2] table [x=OfSeeds,y=k4] {wiki_Seeds_k_Spread.txt};
		    \addplot[blue,mark=square*,fill=white, mark size = 2] table [x=OfSeeds,y=k2] {wiki_Seeds_k_Spread.txt};
		    \addplot[orange,mark=*,fill=white,mark size = 2] table [x=OfSeeds,y=k1] {wiki_Seeds_k_Spread.txt};
		    \addplot[mark=square*, mark size = 2] table [x=OfSeeds,y=k0] {wiki_Seeds_k_Spread.txt};
		  \end{axis}
		\end{tikzpicture}
	}
	\subfloat[\textit{Wiki-Time constraint $T$}]{
		\begin{tikzpicture}[scale=0.65]
		  \begin{axis}[xlabel=T, ymin=7, ymax=73, xticklabels from table={wiki_T_k_Spread.txt}{T}, xtick=data]
		    \addplot[cyan,mark=diamond,fill=white,mark size = 2] table [x=T,y=k10] {wiki_T_k_Spread.txt};
		    \addplot[magenta,mark=star,fill=white,mark size = 2] table [x=T,y=k7] {wiki_T_k_Spread.txt};
		    \addplot[red,mark=triangle*,fill=white,mark size = 2] table [x=T,y=k4] {wiki_T_k_Spread.txt};
		    \addplot[blue,mark=square*,fill=white, mark size = 2] table [x=T,y=k2] {wiki_T_k_Spread.txt};
		    \addplot[orange,mark=*,fill=white,mark size = 2] table [x=T,y=k1] {wiki_T_k_Spread.txt};
		    \addplot[mark=square*, mark size = 2] table [x=T,y=k0] {wiki_T_k_Spread.txt};
		  \end{axis}
		\end{tikzpicture}
	}
	\subfloat[\textit{Wiki-Budget amount $b$}]{
		\begin{tikzpicture}[scale=0.65]
		  \begin{axis}[xlabel=BUDGET AMOUNT, ymin=7, ymax=185,
			       legend cell align=left, legend columns=1, legend style={at={(1.05,0.64)},anchor=west,font=\normalsize, mark size=13pt}]
		    \addplot[opacity=0.5, thick, cyan,mark=diamond,fill=white,mark size = 2] table [x=b,y=k10] {wiki_b_k_Spread.txt};
		    \addplot[opacity=0.5, thick, magenta,mark=star,fill=white,mark size = 2] table [x=b,y=k7] {wiki_b_k_Spread.txt};
		    \addplot[opacity=0.5, thick, red,mark=triangle*,fill=white,mark size = 2] table [x=b,y=k4] {wiki_b_k_Spread.txt};
		    \addplot[opacity=0.5, thick, blue,mark=square*,fill=white, mark size = 2] table [x=b,y=k2] {wiki_b_k_Spread.txt};
		    \addplot[opacity=0.5, thick, orange,mark=*,fill=white,mark size = 2] table [x=b,y=k1] {wiki_b_k_Spread.txt};
		    \addplot[opacity=0.5, thick, mark=square*, mark size = 2] table [x=b,y=k0] {wiki_b_k_Spread.txt};
		  \end{axis}
		\end{tikzpicture}
	}
	\\
	\flushleft
	\subfloat[\textit{Epinions-Number of seeds}]{
		\begin{tikzpicture}[scale=0.65]
		  \begin{axis}[xlabel=\# OF SEEDS,ylabel=SPREAD-(\textit{Epinions}), ymin=4, ymax=75, xticklabels from table={epinions_Seeds_k_Spread.txt}{OfSeeds}, xtick=data]
		    \addplot[cyan,mark=diamond,fill=white,mark size = 2] table [x=OfSeeds,y=k10] {epinions_Seeds_k_Spread.txt};
		    \addplot[magenta, mark=star,fill=white,mark size = 2] table [x=OfSeeds,y=k7] {epinions_Seeds_k_Spread.txt};		    
		    \addplot[red,mark=triangle*,fill=white,mark size = 2] table [x=OfSeeds,y=k4] {epinions_Seeds_k_Spread.txt};
		    \addplot[blue,mark=square*,fill=white, mark size = 2] table [x=OfSeeds,y=k2] {epinions_Seeds_k_Spread.txt};
		    \addplot[orange, mark=*,fill=white,mark size = 2] table [x=OfSeeds,y=k1] {epinions_Seeds_k_Spread.txt};
		    \addplot[mark=square*, mark size = 2] table [x=OfSeeds,y=k0] {epinions_Seeds_k_Spread.txt};
		  \end{axis}
		\end{tikzpicture}
	}
	\subfloat[\textit{Epinions-Time constraint $T$}]{
		\begin{tikzpicture}[scale=0.65]
		  \begin{axis}[xlabel=T, ymin=3, ymax=30, xticklabels from table={epinions_T_k_Spread.txt}{T}, xtick=data]
		    \addplot[cyan,mark=diamond,fill=white,mark size = 2] table [x=T,y=k10] {epinions_T_k_Spread.txt};
		    \addplot[opacity=0.5, thick, magenta,mark=star,fill=white,mark size = 2] table [x=T,y=k7] {epinions_T_k_Spread.txt};
		    \addplot[red,mark=triangle*,fill=white,mark size = 2] table [x=T,y=k4] {epinions_T_k_Spread.txt};
		    \addplot[blue,mark=square*,fill=white, mark size = 2] table [x=T,y=k2] {epinions_T_k_Spread.txt};
		    \addplot[orange,mark=*,fill=white,mark size = 2] table [x=T,y=k1] {epinions_T_k_Spread.txt};
		    \addplot[mark=square*, mark size = 2] table [x=T,y=k0] {epinions_T_k_Spread.txt};
		  \end{axis}
		\end{tikzpicture}
	}
	\subfloat[\textit{Epinions-Budget amount $b$}]{
		\begin{tikzpicture}[scale=0.65]
		  \begin{axis}[xlabel=BUDGET AMOUNT, ymin=3, ymax=75,
			       legend cell align=left, legend columns=1, legend style={at={(1.05,0.64)},anchor=west,font=\normalsize, mark size=13pt}]
		    \addplot[opacity=0.5, thick, cyan,mark=diamond,fill=white,mark size = 2] table [x=b,y=k10] {epinions_b_k_Spread.txt};
		    \addplot[opacity=0.5, thick, magenta,mark=star,fill=white,mark size = 2] table [x=b,y=k7] {epinions_b_k_Spread.txt};
		    \addplot[opacity=0.5, thick, red,mark=triangle*,fill=white,mark size = 2] table [x=b,y=k4] {epinions_b_k_Spread.txt};
		    \addplot[opacity=0.5, thick, blue,mark=square*,fill=white, mark size = 2] table [x=b,y=k2] {epinions_b_k_Spread.txt};
		    \addplot[opacity=0.5, thick, orange,mark=*,fill=white,mark size = 2] table [x=b,y=k1] {epinions_b_k_Spread.txt};
		    \addplot[opacity=0.5, thick, mark=square*, mark size = 2] table [x=b,y=k0] {epinions_b_k_Spread.txt};
       		    \addlegendentry{k10}
       		    \addlegendentry{k7}
       		    \addlegendentry{k4}
       		    \addlegendentry{k2}
       		    \addlegendentry{k1}
       		    \addlegendentry{k0}
		  \end{axis}
		\end{tikzpicture}
	}
	\caption{Results of varying the influence parameters for the \textit{Wiki} and \textit{Epinions} Dataset}
	\label{fig:exp:Wiki-Epinions_1}
\end{figure*}
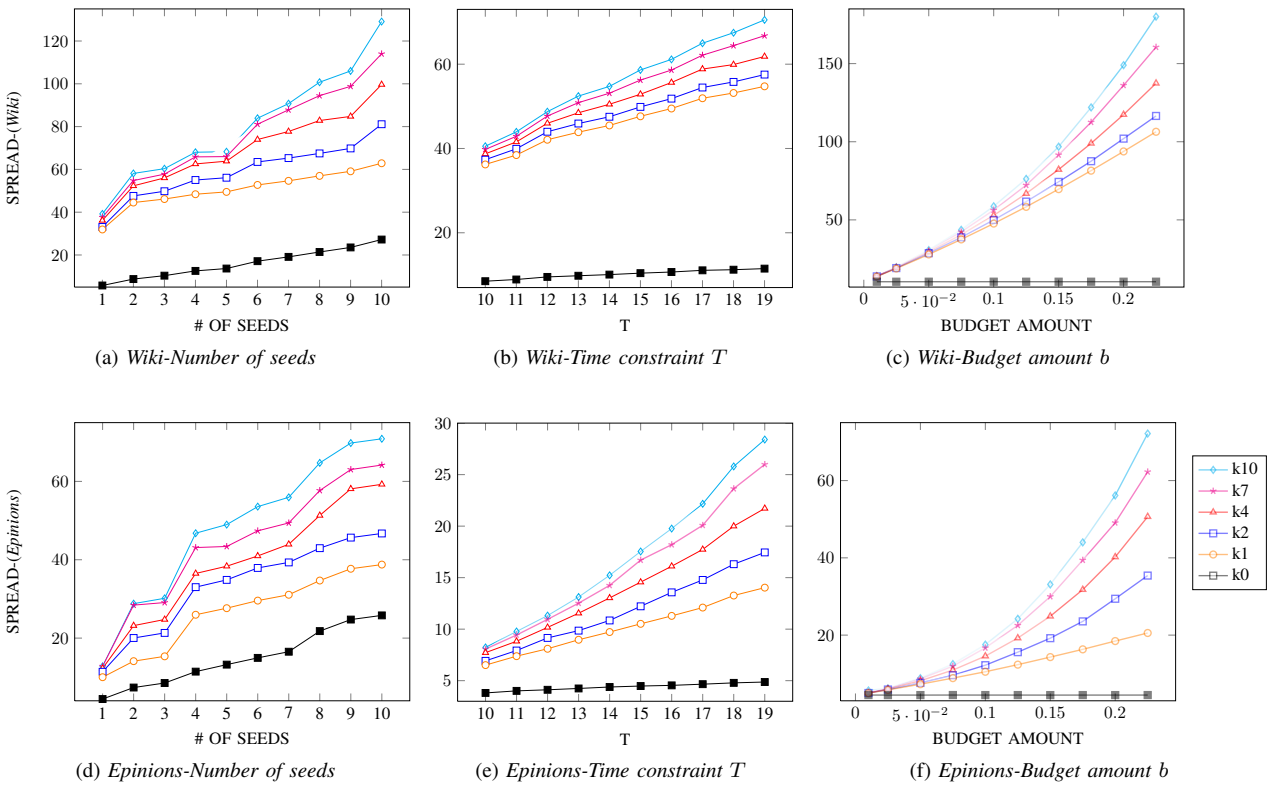
\vspace{-2em}
\begin{figure*}
	\flushleft
	\subfloat[\textit{Type of seed selection}]{
		\begin{tikzpicture}[scale=0.65]
		  \begin{axis}[xlabel=T,ylabel=SPREAD, ymin=2, ymax=84, xticklabels from table={wiki_T_seedsQuality_Spread.txt}{T}, xtick=data,
			       legend cell align=left, legend columns=2, legend style={at={(0.21,0.61)},anchor=west,font=\normalsize, mark size=11pt, draw=none}]
		    \addplot[mark=square*, mark size = 2] table [x=T,y=PoorNoBoost] {wiki_T_seedsQuality_Spread.txt};
		    \addplot[orange,mark=square*,fill=white,mark size = 2] table [x=T,y=Poor] {wiki_T_seedsQuality_Spread.txt};
		    \addplot[blue,mark=triangle*,mark size = 2] table [x=T,y=MediumNoBoost] {wiki_T_seedsQuality_Spread.txt};
		    \addplot[red,mark=triangle*,fill=white,mark size = 2] table [x=T,y=Medium] {wiki_T_seedsQuality_Spread.txt};
		    \addplot[magenta,mark=*,mark size = 2] table [x=T,y=GoodNoBoost] {wiki_T_seedsQuality_Spread.txt};
		    \addplot[cyan,mark=*,fill=white,mark size = 2] table [x=T,y=Good] {wiki_T_seedsQuality_Spread.txt};
       		    \addlegendentry{PoorNoBoost}
       		    \addlegendentry{Poor}
       		    \addlegendentry{MediumNoBoost}
       		    \addlegendentry{Medium}
       		    \addlegendentry{GoodNoBoost}
       		    \addlegendentry{Good}
		  \end{axis}
		\end{tikzpicture}
	}
	\subfloat[\textit{Trivalency models}]{
		\begin{tikzpicture}[scale=0.65]
		  \begin{axis}[xlabel=T, ymin=7, ymax=145, xticklabels from table={wiki_T_TRVVariance_Spread.txt}{T}, xtick=data,
			       legend cell align=left, legend columns=2, legend style={at={(0.01,0.86)},anchor=west,font=\normalsize, mark size=11pt, draw=none}]
		    \addplot[mark=square*,mark size = 2] table [x=T,y=TRV005NoBoost] {wiki_T_TRVVariance_Spread.txt};
		    \addplot[orange,mark=square*, fill=white,mark size = 2] table [x=T,y=TRV005] {wiki_T_TRVVariance_Spread.txt};
		    \addplot[blue,mark=triangle*,mark size = 2] table [x=T,y=TRV010NoBoost] {wiki_T_TRVVariance_Spread.txt};
		    \addplot[red,mark=triangle*,fill=white, mark size = 2] table [x=T,y=TRV010] {wiki_T_TRVVariance_Spread.txt};
		    \addplot[magenta,mark=*, mark size = 2] table [x=T,y=TRV015NoBoost] {wiki_T_TRVVariance_Spread.txt};		    
		    \addplot[cyan,mark=*,fill=white,mark size = 2] table [x=T,y=TRV015] {wiki_T_TRVVariance_Spread.txt};
      		    \addlegendentry{TRV005NoBoost}
      		    \addlegendentry{TRV005}
       		    \addlegendentry{TRV010NoBoost}
       		    \addlegendentry{TRV010}
       		    \addlegendentry{TRV015NoBoost}       		    
       		    \addlegendentry{TRV015}
		  \end{axis}
		\end{tikzpicture}
	}
	\subfloat[\textit{Speeding-up policy variance}]{
		\begin{tikzpicture}[scale=0.65]
		  \begin{axis}[xlabel=T, ymin=4, ymax=82, xticklabels from table={wiki_T_bAbStr_Spread.txt}{T}, xtick=data,
			       legend cell align=left, legend columns=2, legend style={at={(0.0,0.87)},anchor=west,font=\normalsize, mark size=11pt, draw=none}]
		    \addplot[orange,mark=diamond*, fill=white,mark size = 2] table [x=T,y=b0.1PropProbOnly] {wiki_T_bAbStr_Spread.txt};
		    \addplot[magenta,mark=diamond*, mark size = 2] table [x=T,y=b0.2PropProbOnly] {wiki_T_bAbStr_Spread.txt};		    
		    \addplot[blue,mark=triangle*,fill=white, mark size = 2] table [x=T,y=b0.1-1st-tu] {wiki_T_bAbStr_Spread.txt};
		    \addplot[cyan,mark=triangle*, mark size = 2] table [x=T,y=b0.2-1st-tu] {wiki_T_bAbStr_Spread.txt};
 		    \addplot[red,mark=*,fill=white, mark size = 2] table [x=T,y=b0.1-2nd-tu] {wiki_T_bAbStr_Spread.txt};
 		    \addplot[red,mark=*,mark size = 2] table [x=T,y=b0.2-2nd-tu] {wiki_T_bAbStr_Spread.txt};
		    \addplot[mark=square*,mark size = 2] table [x=T,y=NoBoost] {wiki_T_bAbStr_Spread.txt};		    
      		    \addlegendentry{b0.1PropProbOnly}
      		    \addlegendentry{b0.2PropProbOnly}
       		    \addlegendentry{b0.1-1st-tu}
       		    \addlegendentry{b0.2-1st-tu}
       		    \addlegendentry{b0.1-2nd-tu}
       		    \addlegendentry{b0.2-2nd-tu}
      		    \addlegendentry{NoBoost}
		  \end{axis}
		\end{tikzpicture}
	}
	\caption{Varying various boosting parameters for the \textit{Wiki} Dataset}
	\label{fig:exp:Wiki-Vote_2}
\end{figure*}
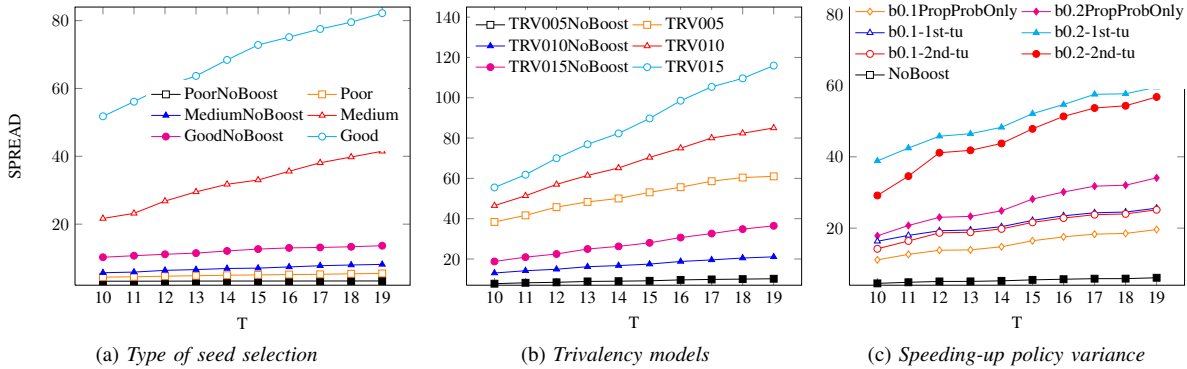

\noindent{\bf Varying the number of seeds.} Figures \ref{fig:exp:Wiki-Epinions_1}(a) and \ref{fig:exp:Wiki-Epinions_1}(d)
depict the spread attained with the number of seeds for various values of the number $k$ of boosted nodes.
We can see that in many cases boosting just a couple of nodes results in a larger increase in spread than selecting more seeds. Another observation is that in general,
the best gain is when the number of boosted nodes is close to the number of seed nodes.
Boosting additional nodes still increases the spread but the marginal gain decreases.
One reason is that most often the best node to boost is the seed. In this case, the gain is large, since the seed influences a large number of other nodes.\\
%
\noindent{\bf Varying the time constraint.} Figures \ref{fig:exp:Wiki-Epinions_1}(b) and \ref{fig:exp:Wiki-Epinions_1}(e) depict the spread with the time constraint $T$ for various values of the number $k$ of boosted nodes for two seed nodes.
This experiment shows that as the spread increases with larger values of $T$, the difference by
boosting additional nodes increases even when we just have one seed node.
\\
\noindent{\bf Varying the amount of budget.} Figures \ref{fig:exp:Wiki-Epinions_1}(c) and \ref{fig:exp:Wiki-Epinions_1}(f) depict the spread with the budget amount $b$ for various values of the number $k$ of boosted nodes. 
Clearly, the larger the $b$, the more the increase of spread. This indicates that strategies for motivating users to react to activations more often is as important as motivating influential initiators.
\vspace*{0.05in}
We also performed experiments with different seed selections, trivalency models and budget allocation strategies. For these  experiments, we report results for the Wiki dataset. The results for the other datasets are qualitative the same and are skipped.\\
\noindent{\bf Seed selection.} In this experiment, we select different types
of seeds (good, medium and poor) and report in Figure \ref{fig:exp:Wiki-Vote_2}(a) the effect of boosting in each of these cases.
In all cases, boosting increases the spread, showing that boost may be useful independently of the quality of the initial seed selection. 
\noindent{\bf Trivalency model.} We use two additional  trivalency models using for trivalency values the set \{0.05, 0.005, 0.0005\}  (denoted TR005) and the set \{0.15, 0.015, 0.0015\} (denoted TR015). We use TR010 to denote the default model. The results are depicted in Figure \ref{fig:exp:Wiki-Vote_2}(b).
Boosting works well for all cases.
An interesting observation is that by boosting just a few nodes of a less active network (i.e., one with small activation probabilities), we achieve the same spread as in the case of a very active network without boosting.\\
\noindent{\bf Policies for altering the delay function $d_u$.} We experimented with different ways of speeding up the reaction of a node, that is, with different ways of modifying the delay function.
The results are shown in Figure \ref{fig:exp:Wiki-Vote_2}(c).
With PropProbOnly, we denote the strategy where we change only the propagation probability of the node and do not alter the delay function.
The strategy ``1st-tu'' increases the probability of the 1st time unit,
while the strategy ``2nd-tu'' of the 2nd time unit.
We consider two values of $b$, namely, $b$ = 0.1 and $b$ =  0.2.
Clearly, the ``1st-tu'' strategy achieves the best spread and this is even more evident for the larger value of $b$.
\section{Conclusions}
\label{con}
Information diffusion and propagation have attracted a lot of attention. A large body of related research focuses on influence spread maximization
by identifying a small set of influential nodes or seeds to initialize the diffusion. In this paper, we introduce a new problem.
We look into identifying the set of nodes whose improved (e.g., more frequent or more rapid) reaction to the diffusion process would result in maximizing the spread for a given set of initiator nodes.
We formalize the problem, study its complexity and present algorithms for its solution.
Our experimental results show that boosting a small set of nodes results in improving the diffusion spread, often more significantly than adding a few seeds.

\bibliographystyle{abbrv}
\bibliography{boost}

\end{document}